\titlespacing*{\section}{0pt}{.2\baselineskip}{.2\baselineskip}
\newcommand{\dr}{\textsuperscript{\#}}
\definecolor{dkred}{rgb} {.84,0.37,0.00}
\tikzset{
    mark_label/.style={label={[black]below:#1}},
    treenode/.style = {align=center, inner sep=0pt, text centered,
            font=\sffamily,text width=1.3em},
    arn_w/.style = {treenode, circle, black, draw=black,
            text width=1.6em},
    arn_w1/.style = {treenode, circle, black, draw=black, fill = dkred,
            text width=1.6em},
    arn_w2/.style = {treenode, circle, black, draw=black, fill = dkred, text width=0em}
}
\newenvironment{theorem}[2][Theorem]{\begin{trivlist}
\item[\hskip \labelsep {\bfseries #1}\hskip \labelsep {\bfseries #2.}]}{\end{trivlist}}
\newenvironment{lemma}[2][Lemma]{\begin{trivlist}
\item[\hskip \labelsep {\bfseries #1}\hskip \labelsep {\bfseries #2.}]}{\end{trivlist}}
\newcommand{\bl}{\begin{flushleft}}
\newcommand{\el}{\end{flushleft}}
\newcommand{\bc}{\begin{center}}
\newcommand{\ec}{\end{center}}
\newcommand{\bE}{{\mathbb{E}}}
\newcommand{\bI}{{\mathbb{I}}}
\newcommand{\mH}{{\mathcal{H}}}
\newcommand{\mA}{{\mathcal{A}}}
\newcommand{\mB}{{\mathcal{B}}}
\title{\bf\large A Bottom-up Approach to Testing Hypotheses That Have a Branching Tree Dependence Structure, \\with False Discovery Rate Control\\ }
\author[1]{Yunxiao Li}
\author[1]{Yi-Juan Hu\thanks{corresponding author}}
\author[2]{Glen A. Satten}
\affil[1]{{\normalsize Department of Biostatistics and Bioinformatics, Emory University}}
\affil[2]{{\normalsize Centers for Disease Control and Prevention}}
\date{}
\def\blfootnote{\xdef\@thefnmark{}\@footnotetext}
\begin{document}

\begin{titlepage}
\maketitle
\thispagestyle{empty}

\vspace{-1.1cm}

\doublespacing

\vspace{-0.2cm}
\blfootnote{\textit{Key words}: Driver nodes; False assignment rate; False discovery rate; Multiple testing; Microbiome}
\noindent Modern statistical analyses often involve testing large numbers of hypotheses.  In many situations, these hypotheses may have an underlying tree structure that not only helps determine the order that tests should be conducted but also imposes a dependency between tests that must be accounted for.  Our motivating example comes from testing the association between a trait of interest and groups of microbes that have been organized into operational taxonomic units (OTUs) or amplicon sequence variants (ASVs).  Given $p$-values from association tests for each individual OTU or ASV, we would like to know if we can declare that a certain species, genus, or higher taxonomic grouping can be considered to be associated with the trait. For this problem, a bottom-up testing algorithm that starts at the lowest level of the tree (OTUs or ASVs) and proceeds upward through successively higher taxonomic groupings (species, genus, family etc.) is required.  We develop such a bottom-up testing algorithm that controls the error rate of decisions made at higher levels in the tree, conditional on findings at lower levels in the tree. We further show this algorithm controls the false discovery rate based on the global null hypothesis that no taxa are associated with the trait. By simulation, we also show that our approach is better at finding \textit{driver taxa}, the highest level taxa below which there are dense association signals. We illustrate our approach using data from a study of the microbiome among patients with ulcerative colitis and healthy controls.\\

\end{titlepage}

\doublespacing

\newpage

\setcounter{page}{1}

\section{Introduction}
\label{sec:introduction}

The false discovery rate (FDR) has largely replaced the family-wise error rate to control the error made when testing many hypotheses. \cite{benjamini_controlling_1995} proposed a simple way to to control the FDR when testing independent hypotheses, which extends easily to hypotheses having positive regression dependence \citep{benjamini_control_2001}.  However, an adjustment to the Benjamini and Hochberg procedure \citep{benjamini_control_2001} to allow arbitrary dependence between tests is very conservative in most settings.  For this reason, testing procedures that control FDR for specific patterns of dependence have been investigated.

\cite{yekutieli_hierarchical_2008} considered hypotheses tests organized in a branching tree using an approach that starts by testing the hypothesis at the ``top" of the tree; if this hypothesis is rejected, hypotheses at the next lowest level are tested.  Testing continues from top to bottom until no further hypotheses can be rejected, at which point no further tests are conducted.  This approach is appropriate for some problems, such as the motivating example in \cite{yekutieli_hierarchical_2008} in which a genome-wide test for (genetic) linkage was conducted, followed by tests for linkage separately on each chromosome, then tests for linkage on the p or q arms of each chromosome, etc.  Moving down the tree corresponds to increasing localization of the linkage signal, making the top-down strategy a natural choice.  The null hypothesis being tested at each node in the tree is the ``global null hypothesis" that none of the tests below this node are significant.

Some problems are not well-suited to the top-down approach.  For example, in a 16S rRNA microbiome study, bacterial sequences are typically grouped into operational taxonomic units (OTUs) or assigned to amplicon sequence variants (ASVs).  Then, the association between each OTU or ASV and a trait of interest is calculated (for simplicity, we restrict our discussion to OTUs with the understanding that the argument can apply to either ASVs or OTUs).  Typically, many OTUs will belong to the same species.  After testing association between each OTU and a trait, we may wish to determine if any species of microbes are associated with the trait.  Depending on our findings, we may wish to test larger groups corresponding to successively higher taxonomic ranks (e.g., genus, family, order, class, phylum, and kingdom).  The natural ordering of hypotheses in this example starts at the bottom of the tree and proceeds upward.  Further, it may be desirable to continue to test hypotheses at higher levels of the tree even if no findings have been made at a lower level, since an accumulation of weak signals from lower levels may coalesce into a detectable signal at a higher level.  The scientific questions of interest thus motivate development of a bottom-up approach to testing tree-structured hypotheses.

When considering if we should declare a certain node (say, a genus) to be associated with the trait we are studying, we adopt the following approach: if a large proportion of species from that genus influence the trait, we should conclude the genus influences the trait.  Conversely if only a few of the species from a genus are non-null, then a better description of the microbes that influence occurrence of the trait is a list of associated species.  Finding taxa that can be said to influence a trait in this sense is the first goal of our approach.  The second goal is to locate the highest taxa in the tree for which we can conclude many taxa below, but not any ancestors above, influence risk; we refer to such taxa as \textit{driver} taxa.  We also consider a related criterion, the \textit{conjunction} null hypothesis, that would require that \textit{all} species from the genus be associated with the trait before declaring the genus is associated.

The rest of this paper is organized as follows.  In Section \ref{sec:method}, we propose a modified null hypothesis for bottom-up testing that adjusts for selection decisions at lower levels of tree. We further develop an error criterion we call the false assignment rate (FAR) that corresponds to this modified null hypothesis, and propose an algorithm for assessing the significance of association between taxonomic units (or, more generally, nodes in the tree) and a trait under study that controls the FAR.  In Section \ref{sec:simu}, we compare our proposed methods with other existing methods using simulated data, and show that the FAR can approximate the FDR under the conjunction null hypothesis.  In Section \ref{sec:data1}, we apply our new methods to data on the human gut microbiome from a study of inflammatory bowel disease (IBD), and detect driver taxa that are associated with ulcerative colitis (UC).  Section \ref{sec:dis} contains a discussion of our results and some possible future directions implied by our work.

\section{Methods}
\label{sec:method}

\noindent{\large\bf 2.1 Preliminaries}

The hypotheses we test form the nodes of a branching tree; here, we review the terminology we use.  The \textit{root} node is the ``top" of the tree (in Figure \ref{tree}(a), $N_{4,1}$ is the root node).  For any two nodes that are directly connected, the node closest to (furthest from) the root is the \textit{parent }(\textit{child}) node.  The set of child nodes of a parent are its \textit{offspring}. A node is an \textit{inner node} if it has at least one child node; otherwise it is a \textit{leaf} node.  The \textit{ancestors} of a node are all the nodes traversed in a path from that node to the root.  The \textit{descendents} of a node are all nodes having that node as an ancestor.   A \textit{subtree} is a tree rooted at an inner node of the full tree, comprised of the subtree root and all its descendants.    For example, in Figure 1(a), the tree rooted at $N_{3,1}$ that includes inner nodes $N_{3,1}$, $N_{2,1}$, $N_{2,2}$, and leaf nodes $N_{1,1}$, $N_{1,2}$, $N_{1,3}$, $N_{1,4}$ is a subtree of the full tree.  The \textit{depth} of a node is the number of edges between that node and the root node. Because each node corresponds to a hypothesis, we will sometimes refer to testing a node as shorthand for testing the hypothesis at that node.

\begin{figure}
\centering
(a)
\includegraphics[width=1.0\textwidth]{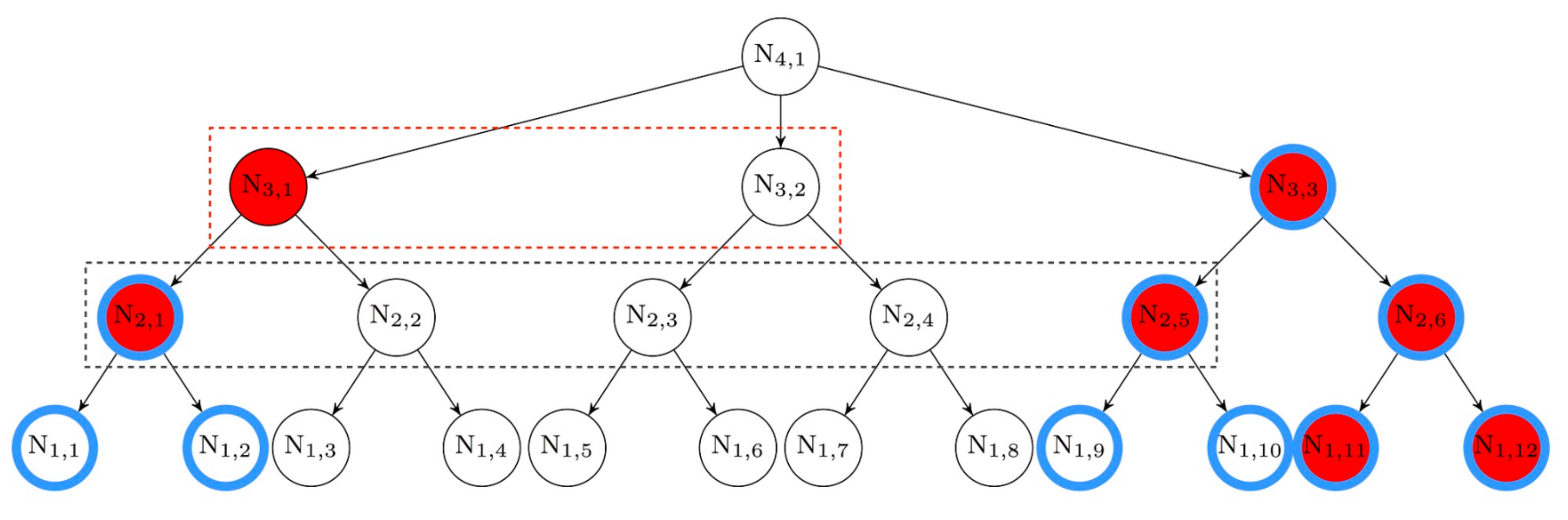}
(b)
\includegraphics[width=1.0\textwidth]{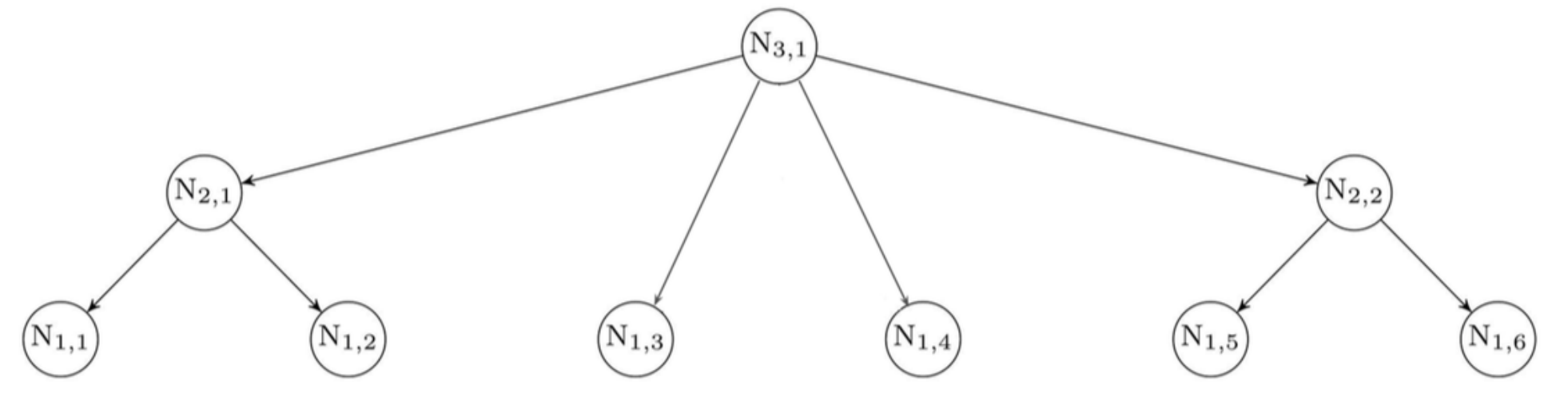}
\caption{\label{tree}{(a) A hypothetical example of a set of hypotheses having a tree structured relationship. Nodes are labeled by \textit{level} (first subscript) and then numbered within level (second subscript).  Nodes highlighted with blue circles are truly associated. A node colored red indicates it is detected (declared to be associated with the trait of interest by a testing method). With the bottom-up methods, all nodes at the bottom level are tested at level 1, nodes inside the black box are tested at level 2, and nodes inside the red box are tested at level 3. (b) A hypothetical example illustrating a set of hypotheses having a dependence structure corresponding to an incomplete tree.  In this example, it makes scientific sense to assign nodes $N_{1,3}$ and $N_{1,4}$ to level 1 even though they have a different depth than the other leaf nodes. For example, these two nodes could correspond to OTUs that are missing a species assignment but share a genus with the other leaf nodes. }}
\end{figure}

We use the term \textit{level} to describe sets of nodes that will be tested together.  In the simple case such as in Figure \ref{tree}(a), nodes that have the same depth are assigned to the same level; we call such a tree \textit{complete}.  For \textit{incomplete} trees such as that shown in Figure \ref{tree}(b), level is assigned by the investigator and does not necessarily correspond to depth. For example, in a phylogenetic (taxonomic) tree, level typically corresponds to the taxonomic rank (species, genus, etc.); a phylogenetic tree is then incomplete when the leaf nodes (OTUs) have missing assignment below a certain level.  The tree shown in Figure \ref{tree}(b) could be an example of this, where level 1 corresponds to OTUs and where OTUs $N_{1,1}, N_{1,2}, N_{1,5}$, and $N_{1,6}$ have genus \textit{and} species assignments but OTUs $N_{1,3}$ and $N_{1,4}$ are \textit{only} assigned at the genus level.  We will sometimes refer to a node at which we have rejected the null hypothesis as a \textit{detected} node, and a detected node is a \textit{driver} node if none of its ancestors are detected.  We further assume that \textit{p}-values for the association tests at all leaf nodes are available.  The algorithm we describe here gives \textit{p}-values for all internal nodes as needed.

There are two ways to imagine calculating $p$-values for inner nodes in a bottom-up testing algorithm for tree-structured hypotheses.  In the first approach, $p$-values for inner nodes are determined entirely from the $p$-values of their offspring (and hence, are determined by the $p$-values at leaf nodes).  In the second approach, $p$-values for inner nodes are calculated by applying a test statistic to pooled data \citep{tang2016general}.  The second approach may be problematic as it may be hard to determine the null distribution of the pooled data, given decisions about $p$-values at lower levels (e.g., removing data from nodes having $p$-values less than a threshold from pooling for testing the modified null defined in Section 2.3). Also, it is hard to know how the conjunction null hypothesis could be tested using pooled data. In addition, pooling data may result in effects being cancelled out if some offspring nodes are protective while others increase risk. For these reasons, we seek an algorithm that operates entirely on the $p$-values of the leaf nodes.

\bigskip

\noindent{\large\bf 2.2 Bottom-Up Testing}

The goals of our inference are to find nodes in the tree (e.g., taxa for the microbiome example) which are associated with a trait of interest.  We wish to avoid declaring a node to be associated just because a few offspring nodes are strongly associated; thus we restrict claims of association to nodes in which a large number of offspring are associated.  For this goal the global null hypothesis, which specifies a node is associated if \textit{even one} offspring node is associated, is not appropriate.  Directly testing the conjunction null hypothesis that \textit{not all} offspring of a node are associated \citep{price_cognitive_1997} is known to be conservative in many situations, as the $p$-value is determined by selecting the \textit{largest} $p$-value from the $p$-values at each offspring, and then comparing the selected $p$-value to the uniform distribution on $[0,1]$ \citep{friston2005conjunction}. However, it does easily lead to a bottom-up procedure; after propagating the largest $p$-value from offspring nodes to their parent nodes, nodes are then detected using the standard BH \citep{benjamini_controlling_1995} procedure.  We report results from this procedure even though we do not recommend it, due to its low power.

To develop a bottom-up procedure that avoids the low power of the rigorous test of the conjunction null hypothesis, but still finds taxa for which most offspring are associated, we use the insight that, if an offspring node has already been found to be associated, including it in our test for the parent adds no new information.  This insight is reinforced by the observation that, in an omnibus test of association for all offspring, a strong association from already-detected children may lead to the parent node being detected even if most other child nodes are truly null.  Thus, as a surrogate for the conjunction null hypothesis, we propose to test a modified null that, among the offspring \textit{that have not been previously detected at a lower level}, none are associated, against the alternative that some previously-undetected offspring are associated.  We then combine the $p$-values of these previously-undetected nodes to form a test statistic for the parent node, using a small modification of Stouffer's Z test.  Nodes for which all offspring are already detected are not tested, but automatically detected and require special handling as described in Section 2.4.  

To illustrate these issues, consider the hypothetical example in Figure \ref{tree}(a).  Nodes highlighted with blue circles are truly associated; $N_{2,1}$ and $N_{3,3}$ are driver taxa. Although $N_{3,1}$ is associated under the global null hypothesis because $N_{1,1}$ and $N_{1,2}$ are its descendants, we would prefer to conclude that $N_{2,1}$ rather than $N_{3,1}$ explains the association signal among the descendants of $N_{3,1}$ because $N_{3,1}$ has descendants that are not truly associated.  This is achieved by using the modified null hypothesis, because $N_{3,1}$ is \textit{not} associated under the modified null hypothesis, because $N_{2,2}$ is not associated.

The error rate of any testing algorithm depends on the null hypothesis used to ascertain the true association status of each node.  Thus, we distinguish between the FDR, for which we use the global null hypothesis to determine true association status; the false assignment rate (FAR), for which we use the modified null to determine true association status; and the FDRc, for which we use the conjunction null to determine true association status.  For all three error rates, a false discovery means that a node was detected (i.e., found to be associated) that is in fact not associated, under the appropriate null. We use the term FAR rather than modified FDR because the ``assignment" of nodes as being associated or not is influenced by decisions made at lower levels, and reserve the term ``discovery" for situations where decisions are based on a test statistic that is calculated for each node without regard to decisions at other nodes. Thus, our procedure must test all nodes at the lowest level, then the next lowest level and so on.   Suppose the tree has $L$ levels, and let $n_l$, $l = 1,2,\ldots,L,$ be the number of nodes on level $l$ and $N_{l,j}$, $j=1,2,\ldots,n_l$, denote the $j^\text{th}$ node on level $l$. We then define
$$
R_{l,j}=\bI\left(\text{node}~N_{l,j}~\textit{is detected}\right),
$$
where $\bI\left(.\right)$ is the indicator function. We note that if node $N_{l,j}$ is not tested, then $R_{l,j}=0$ by default.  Let $\mathcal{U}_{l,j}$ denote the set of undetected offspring of the  $j^\text{th}$ node on level $l=2, \ldots, L$; note that $\mathcal{U}_{1,j}=\emptyset$. We next define $V^x_{l,j}$ to indicate a false assignment was made under null hypothesis $x$, where $x=g$ for the global null hypothesis, $x=m$ for the modified null hypothesis, and $x=c$ for the conjunction null hypothesis. Thus, 
$$
V_{l,j}^{m}=\bI\left(R_{l,j}=1 \text{ but the } \textit{modified} \text{ null hypothesis given }\mathcal{U}_{l,j} \text{ at node }N_{l,j} \text{ is true}\right). 
$$
Similarly, for the global and conjunction null hypotheses, we define $V_{l,j}^g$ and $V_{l,j}^c$ as
$$
V_{l,j}^{x}=\bI\left(R_{l,j}=1 \text{ but null hypothesis } x \text{ at node } N_{l,j} \text{ is true}\right)   , \text{     } x=c,g.
$$

The error rate under each null hypothesis is given by
\begin{equation*}
\mathbb{E}  \left[  \frac{\sum_{l=1}^{L} \sum_{j=1}^{n_l} V_{l,j}^{x}}{\left(\sum_{l=1}^{L} \sum_{j=1}^{n_l} R_{l,j}\right)\bigvee 1}  \right].
\end{equation*}
If the global null hypothesis at $N_{l,j}$ is true, then the modified null hypothesis at $N_{l,j}$ must also be true, which in turn implies the conjunction null hypothesis at $N_{l,j}$ is true.  Thus, $V_{l,j}^{g} \le V_{l,j}^{m} \le V_{l,j}^c$ holds for all nodes. Thus, the three error rates FDR, FAR and FDRc defined above, are related by
$$
\mathrm{FDR} \le \mathrm{FAR} \le \mathrm{FDRc}.
$$

This implies that controlling FAR is a more stringent criterion than controlling FDR, and so a testing procedure that controls the FAR will automatically control the FDR. However, controlling FAR does not guarantee control of FDRc.  Nevertheless, to the extent that the test used to combine the $p$-values of previously-undetected nodes is powerful when non-null effects appear in most or all individual tests, we can expect controlling FAR should be similar to controlling FDRc. We return to this issue in section 2.3, where this reasoning leads us to advocate use of Stouffers Z-score over Fisher's method, when testing the modified null hypothesis.

The modified null hypothesis we test has another important implication:  a test at one level may decide hypotheses at one or more higher levels. This occurs when a node has no undetected offspring, i.e. $\mathcal{U}_{l,j}=\emptyset$.  For example, since both offspring of $N_{2,6}$ in Figure \ref{tree}(a) are detected, we should immediately conclude that $N_{2,6}$ is associated.  Similarly, having already detected $N_{2,6}$, if we determine that $N_{2,5}$ is associated, then $N_{3,3}$ would have no undetected offspring and should be determined to be associated as well.  We present two approaches to account for the effect that this multiplicity has on the FAR. In the first approach, we do not allow this propagation, and instead consider that the test of the last undetected offspring of a parent node is in fact a test of the parent node (or, in general, a test of the highest node decided by this single test).  So, for example, in Figure \ref{tree}(a), if $N_{2,6}$ had already been detected, rejecting the modified null at $N_{2,5}$ would add $N_{3,3}$ to the list of detected nodes, but not $N_{2,5}$.  In this way, each hypothesis we test results in a single addition to the list of detected nodes.  Although this solution is unsatisfactory in many ways, it leads to a simpler procedure that serves as a useful intermediate result in developing our recommended approach (which we present in Section 2.4).

\bigskip

\noindent{\large\bf 2.3 Testing to Control FAR:  Simple (Unweighted) Proposal}

We now construct a testing procedure that tests the nodes in the tree level by level, starting at level $l = 1$.  For each level $l = 1, \ldots, L$, our testing procedures consist of two elements:  a set of thresholds to determine which nodes are detected at level $l$, and a way of aggregating the $p$-values from the undetected nodes at level $l$ to give $p$-values for the (parent) nodes at level $l+1$ for those nodes at level $l+1$ that have undetected offspring.  Our goal is to control the error rate so that the FAR $\leq q$.  In analogy with the concept of alpha spending in interim analysis \citep{demets1994interim}, we allocate to each level $l$ a target level $q_{l} \;(l = 1, 2,\ldots, L)$ chosen so that $\sum_{l=1}^L q_{l} = q$. We note here that we do not guarantee the FAR \textit{at each level} is controlled at level $q_l$, just that the \textit{overall} FAR is controlled at level $q$.  Although it would be interesting to develop an optimal strategy for choosing the $q_l$s, we choose $q_l = q n_l/n $, where $n_l$ is the number of nodes at level $l$ and $n = \sum_{l=1}^L n_l$ is the number of nodes in the tree.

We first consider how to assign $p$-value thresholds to control FAR.  Recall that for $l > 1$, tests at a lower level may have already resulted in detection of some of the nodes at level $l$.  Suppose that, of the $n_l$ total nodes at level $l$, there are $n_{l}^{*}$ nodes that have at least one child node that has not been detected.  Without loss of generality, assume the $p$-values for each node at level $l$,  $p_{l,1}, p_{l,2}, \ldots, p_{l,n_l^*}$, have been sorted in ascending order, and let the sorted values be denoted by $p_{l,(1)}\le p_{l,(2)}\le \cdots\le p_{l,(n_{l}^{*})}$. Let $d^*_{l}$ denote the (as yet unknown) number of nodes detected at level $l$.  We seek a set of ascending thresholds $\alpha_{l,1} \le \alpha_{l,2} \le \cdots \le \alpha_{l,n_{l}^{*}}$ by which we reject the modified null hypothesis at $d^*_l > 0$ nodes (corresponding to $p_{l,(1)},\ldots, p_{l,(d^*_l)}$) if $p_{l,(1)} \le \alpha_{l,1}, p_{l,(2)} \le \alpha_{l,2},\ldots, p_{l,(d^*_l)} \le \alpha_{l,d^*_{l}}$ but $p_{l,(d^*_{l}+1)}  > \alpha_{l,d^*_l+1}$; we accept the modified null hypothesis at all nodes in level $l$ if $p_{l,(1)} > \alpha_{l,1} $ in which case we take $d^*_{l}=0$, or reject the modified null hypotheses at all nodes in level $l$ if $p_{l,(1)} \le \alpha_{l,1}, p_{l,(2)} \le \alpha_{l,2},\ldots, p_{l,(n_{l}^*)} \le \alpha_{l,n_{l}^{*}}$ in which case we take $d^*_l=n_l^*$. We adopt the thresholds $\{\alpha_{l,j}\}$ given by 
\begin{eqnarray}
\label{threshold1}
\frac{\alpha_{l,j}}{1-\alpha_{l,j}} = \left(\frac{D_{l-1}+j}{n_l^* - j + 1} \times q_l \right)\bigwedge \frac{\tau_0}{1-\tau_0},
\end{eqnarray}
where $D_{l-1}=\sum_{l^{\prime}=1}^{l-1} d^*_{l^{\prime}}$ for $l\ge 2$ is the cumulative number of detection made up to and including the $(l-1)$th level, $D_{0}=0$, and $\tau_0$ is a pre-specified constant to prevent nodes with large $p$-values from being detected if a large number (say, $m$) of null hypotheses can be easily rejected because of very low $p$-values, in which case $q \times m$ nodes with large $p$-values can be said to be detected, while still controlling the overall error rate at level $q$; we set $\tau_0 = 0.3$ in this article. At each level, the thresholds \eqref{threshold1} are a variant of the thresholds in the step-down test proposed by \cite{gavrilov_adaptive_2009}, which have been used to control FDR in some applications as they have been shown to be more powerful than the standard BH procedure. Theorem 1 asserts that our bottom-up procedure with thresholds \eqref{threshold1} control the FAR at $ q$.
\begin{theorem}{1}
Assume the three conditions hold: (C1) nodes on the same level have the same depth; (C2) $p$-values for null nodes follow the uniform distribution $\mathrm{U}[0,1]$; (C3) at each level, the $p$-value for a null node is independent of the $p$-values at all other nodes. Then the bottom-up procedure with thresholds \eqref{threshold1} ensures that the FAR $\leq q$.
\end{theorem}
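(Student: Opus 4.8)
The plan is to recognize the per-level thresholds \eqref{threshold1} as a generalization of the adaptive step-down critical values of \cite{gavrilov_adaptive_2009} (GBS), and then to reduce the multi-level FAR to a sum of single-level FDR bounds, one per level, each of which follows from the GBS guarantee. Writing the GBS critical value $\alpha_i$ for $m$ hypotheses in odds form gives $\alpha_i/(1-\alpha_i)=iq/(m+1-i)$, which is exactly \eqref{threshold1} once we replace $q$ by $q_l$, $m$ by $n_l^*$, and shift the rank index $i$ to $D_{l-1}+i$ (ignoring the $\tau_0$ cap for now). The two features new relative to GBS are thus the cumulative offset $D_{l-1}$ and the per-level budget $q_l$, and the proof is organized around showing that these interact so the contributions telescope to $q$.

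First I would decouple the levels. Let $V_l^m=\sum_j V_{l,j}^m$ and recall $D_L=\sum_{l'}d^*_{l'}\ge D_l$ for every $l$, where $D_l=D_{l-1}+d_l^*$. Since a false assignment at level $l$ forces $d_l^*\ge 1$ and hence $D_l\ge 1$, we have the termwise bound $V_l^m/(D_L\vee 1)\le V_l^m/(D_l\vee 1)$, so
\[
\mathrm{FAR}=\bE\!\left[\frac{\sum_{l} V_l^m}{D_L\vee 1}\right]\le\sum_{l=1}^{L}\bE\!\left[\frac{V_l^m}{D_l\vee 1}\right].
\]
This step is the crux of the decomposition: replacing the shared global denominator $D_L$ by the running total $D_l$ is precisely what lets the offset $D_{l-1}$ in \eqref{threshold1} cancel against the denominator, which is exactly why that offset appears in the numerator of the threshold in the first place.

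Next I would condition on the $\sigma$-field $\mathcal{F}_{l-1}$ generated by all testing decisions through level $l-1$. Given $\mathcal{F}_{l-1}$, the quantities $D_{l-1}$, $n_l^*$, the set of nodes tested at level $l$, and (via the realized sets $\mathcal{U}_{l,j}$) the set of nodes that are null under the modified null are all fixed; moreover, by (C1)--(C3), the null leaf $p$-values are independent $\mathrm{U}[0,1]$, so the Stouffer combination at any modified-null inner node is again $\mathrm{U}[0,1]$ and independent across level-$l$ nodes. Conditionally on $\mathcal{F}_{l-1}$ the level-$l$ problem is therefore an independent step-down problem with uniform nulls. I would then obtain the single-level bound $\bE[V_l^m/(D_l\vee 1)\mid\mathcal{F}_{l-1}]\le q_l$ by a padding reduction: append $D_{l-1}$ dummy hypotheses with $p$-value $0$ (always rejected, counted as non-null) to the $n_l^*$ real level-$l$ hypotheses and run the GBS step-down at level $q_l$ on these $m'=D_{l-1}+n_l^*$ hypotheses. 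The GBS critical value for the $(D_{l-1}+i)$-th smallest is $(D_{l-1}+i)q_l/(n_l^*-i+1)$, matching \eqref{threshold1} exactly; the dummies occupy the first $D_{l-1}$ ranks and the first real violation coincides with the original stopping rule, so the padded procedure rejects the dummies together with the originally detected set, giving total rejections $D_l$ and false count $V_l^m$. The GBS FDR guarantee then yields the bound, and summing over levels gives $\mathrm{FAR}\le\sum_l q_l=q$.

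I expect two places to require care. The first is the conditional reduction: justifying rigorously that, given the lower-level outcomes, the derived inner-node $p$-values at level $l$ are genuinely uniform and mutually independent (so that GBS applies), which is where (C1)--(C3) and the choice of Stouffer's combination do the work, and where the data-dependence of the modified-null status must be tracked through the filtration. The second, and the remaining obstacle, is the $\tau_0$ cap: capping each $\alpha_{l,j}$ downward only shrinks the thresholds, so I would argue that the padded step-down with capped critical values rejects no more than the uncapped GBS procedure and hence still controls the conditional FDR at $q_l$; making this monotonicity argument precise for step-down critical values is the last piece to pin down.
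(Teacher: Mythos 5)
Your overall architecture coincides with the paper's: the paper also bounds the false assignment proportion by $\sum_{l}V_l/\bigl(D_{l-1}+(R_l\vee 1)\bigr)$ and then proves the per-level conditional bound $\bE\bigl[V_l/\bigl(D_{l-1}+(R_l\vee 1)\bigr)\,\big|\,\mathcal{G}_{l-1}\bigr]\le q_l$, summing to $q$. Where you genuinely depart from the paper is in how that single-level bound is established. The paper goes \emph{inside} the \cite{gavrilov_adaptive_2009} (GBS) proof and redoes it with the offset $D_{l-1}$ carried through every step (via a summation-by-parts lemma and the sets $\mA_k$, $\mB_{k-1}^{(-j)}$). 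You instead reduce to the GBS theorem as a black box by padding with $D_{l-1}$ dummy non-null hypotheses having $p$-value $0$. For the uncapped thresholds this reduction is correct and quite elegant: the dummies occupy the first $D_{l-1}$ ranks, the induced critical values at rank $D_{l-1}+j$ have odds $(D_{l-1}+j)q_l/(n_l^*-j+1)$ matching \eqref{threshold1}, the padded procedure rejects $D_{l-1}+R_l$ hypotheses of which $V_l$ are false, and GBS applied conditionally on $\mathcal{G}_{l-1}$ (where the Stouffer-combined null $p$-values are uniform and independent, and where GBS tolerates arbitrary, even degenerate, non-null distributions) gives exactly the needed inequality. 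This buys a much shorter argument than the paper's.

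The genuine gap is your treatment of the $\tau_0$ cap, and the monotonicity argument you propose is false as a general principle. Shrinking step-down critical values does shrink the rejection set (it remains a prefix of the sorted $p$-values), but a smaller rejection set does not imply a smaller FDP or FDR: removing true rejections while retaining false ones increases the ratio. Concretely, if the three smallest $p$-values correspond to null, non-null, non-null, rejecting all three gives FDP $=1/3$ while rejecting only the first gives FDP $=1$; so ``capped rejects no more than uncapped, hence FDR is still controlled'' is a non sequitur. The paper never faces this problem because its proof uses the threshold values in only one place, the inequality $\alpha_k/(1-\alpha_k)\le (D_{l-1}+k)\,q_l/(n^*+1-k)$, which the capped thresholds satisfy by construction (the minimum with $\tau_0/(1-\tau_0)$ only tightens it), together with the fact that the capped thresholds remain non-decreasing. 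To repair your proof you must replace the black-box invocation of GBS with the observation that the GBS \emph{proof} applies verbatim to any non-decreasing critical values whose odds are dominated by $iq_l/(m+1-i)$ — i.e., you must go inside the proof after all, which is precisely what the paper does — and then your padding construction (or the paper's direct offset bookkeeping) finishes the argument.
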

The proof of this theorem is provided in Appendix A.1. Condition (C1) assumes that nodes on the same level have the same depth, and will be relaxed in Section 2.5. Conditions (C2) and (C3) can be satisfied by our proposal below for obtaining $p$-values for parent nodes.

We now consider how to aggregate the $p$-values from level $l$ that correspond to the undetected offspring of a node at level $l+1$. Note that each undetected node at level $l$ is a member of exactly one of the sets $\mathcal{U}_{l+1,j}$, $j=1,\ldots,n^*_{l+1}$, the collections of the undetected offspring of nodes at level $l+1$. Thus, for the $j^\text{th}$ node at level $l+1$, we pool information from nodes in $\mathcal{U}_{l+1,j}$.  Note that the $p$-values of the undetected nodes at level $l$ necessarily exceed the threshold $\alpha_{l,d^*_l+1}$, and are hence not uniformly distributed on the interval $[0,1]$.  However, since this is the only restriction on these $p$-values,  it follows that, under either the global or modified null hypothesis, the $p$-values for nodes that were not detected at level $l$ are uniformly distributed on the interval $[\alpha_{l,d^*_l+1},1]$;  equivalently, adjusted $p$-values $p_{l,k}^{\prime}=(p_{l,k} - \alpha_{l,d^*_l+1})/(1-\alpha_{l,d^*_l+1})$ are uniformly distributed on the interval $[0,1]$.   Thus, we form Stouffer's Z score:
\begin{equation}
\label{eq3}
Z_{l+1,j} 
= \frac{1}{\sqrt{|\mathcal{U}_{l+1,j}|}}\sum_{k\in \mathcal{U}_{l+1,j}} \Phi^{-1}(1 - p_{l,k}^{\prime}), 
\end{equation}
where $\Phi$ is the standard normal cumulative distribution function and $|\mathcal{U}_{l+1,j}|$ represents the cardinality of $\mathcal{U}_{l+1,j}$. The $Z_{l+1,j}$ calculated using the undetected null nodes in $\mathcal{U}_{l+1,j}$ follows a standard normal distribution $N(0,1)$ under the modified null conditional on the pattern of detection at level $l$. Thus, $p_{l+1,j}=1 - \Phi(Z_{l+1,j})$;  in addition, $p_{l+1,j}$ and $p_{l+1,j^{\prime}}$ are independent since, on any tree, $\mathcal{U}_{l+1,j} \cap \mathcal{U}_{l+1,j^{\prime}}=\emptyset$. 

We use Stouffer's Z-score as it is known to be powerful when small or moderate non-null effects appear in the majority of individual tests as opposed to Fisher's method, which is additionally powerful when only a few large non-null effects are present \citep{loughin2004systematic}. As a result, using Stouffer's Z-score when controlling the FAR gives a better control of FDRc than using Fisher's method (the results based on Fisher's method not shown).

To summarize our procedure, we start at level $l=1$ with the $p$-values at the leaf nodes (which are given).  We determine which are detected and which are not detected using thresholds $\alpha_{1,j}$ calculated using \eqref{threshold1}.  For any nodes at level $l=2$ that have undetected offspring (i.e., for which $\mathcal{U}_{2,j} \neq \emptyset$), we then aggregate these undetected $p$-values into a $Z$ score using \eqref{eq3} and convert the value of this statistic into a $p$-value, which then serves as the $p$-value for the (parent) nodes on level $l=2$.  We continue in this manner until we reach the root node of the tree.  For this simplified approach, our list of detected nodes consists of each node that was detected at level $l$ for nodes that did not result in multiple detection for a single test, or the highest node detected for those nodes whose detection resulted in nodes at a higher level also being detected.

\bigskip

\noindent{\large\bf 2.4 Testing to Control FAR:  Full (Weighted) Procedure}

The testing procedure described in Section 2.3 may be unsatisfactory to many users because, when a test at a single node results in detection of multiple nodes, only one can be included on the list of discoveries if we wish to control FAR.  Thus, we consider a modification of this algorithm which allows all the detected nodes to be considered discoveries.

The difficulty with including all discoveries made by the approach in Section 2.3 is that an incorrect decision on a single node can result in multiple false discoveries.  To resolve this, we introduce weights $\omega_{l,j}$ that count the number of detections that could arise when testing node $N_{l,j}$ under the modified null hypothesis (i.e., conditional on the detections at lower levels). Consider the example shown in Figure \ref{tree}(a).  Assume at level 1, nodes $N_{1,11}$ and $N_{1,12}$ have been determined to be significant. Then, at level 2, the remaining nodes to be tested are $N_{2,1}$--$N_{2,5}$ (inside the black box). We consider testing nodes at each level in ascending order of $p$-values and assume $p_{2,1} < p_{2,2} < p_{2,3} < p_{2,4} < p_{2,5}$.  Rejecting the modified null hypothesis at $N_{2,1}$ only detects $N_{2,1}$; rejecting the null at $N_{2,2}$ will detect $N_{2,2}$ and $N_{3,1}$ (2 nodes); rejecting the null at $N_{2,3}$ will detect $N_{2,3}$; rejecting the null at $N_{2,4}$ will detect $N_{2,4}$ and $N_{3,2}$ (2 nodes); rejecting the null at $N_{2,5}$ will detect $N_{2,5}$, $N_{3,3}$, and $N_{4,1}$ (3 nodes).  Thus, for this ordering, we define the weights $(\omega_{2,1}, \omega_{2,2}, \omega_{2,3}, \omega_{2,4}, \omega_{2,5})=(1,2,1,2,3)$. Now, suppose instead that the $p$-values were ordered as $p_{2,5} < p_{2,1} < p_{2,2} < p_{2,3} < p_{2,4}$; then the weights would be $(\omega_{2,5}, \omega_{2,1}, \omega_{2,2}, \omega_{2,3}, \omega_{2,4})=(2,1,2,1,3)$.  Although these weights are different, the \textit{sorted} weights $(\omega_{2,(1)}, \omega_{2,(2)}, \omega_{2,(3)}, \omega_{2,(4)}, \omega_{2,(5)}) = (1, 1, 2, 2, 3)$ are the same. It is easy to verify that the same set of \textit{sorted weights} will be obtained with any other ordering of nodes.  Thus, the (unsorted) weights $(\omega_{l,1}, \omega_{l,2}, \ldots, \omega_{l, n_l^*})$ depend on the $p$-values at level $l$ and are thus random (even conditional on the detection events below level $l$). However, for complete trees (i.e., under Condition C1), we show in Appendix A.2 that the sorted weights $\omega_{l,(1)}\le \omega_{l,(2)}\le\cdots \le \omega_{l, (n_l^*)}$ are unique regardless of the ordering of $p$-values.  

Using the weights just defined, the FAR we wish to control becomes 
\begin{equation}
\label{eq1.2}
\mathrm{FAR} = \mathbb{E}\left[\frac{\sum_{l=1}^{L}  \sum_{j=1}^{n_l^*} \omega_{l,j}V_{l,j}^{m}}{\left(\sum_{l=1}^{L} \sum_{j=1}^{n_l^*} \omega_{l,j}R_{l,j} \right)\bigvee 1}\right].
\end{equation}
We modify the thresholds in \eqref{threshold1} by replacing the count $j$ with $\sum_{k=1}^j\omega_{l,(k)}$ and $n_l^*-j+1$ with $\sum_{k=j}^{n_l^*}\omega_{l,(k)}$ to give:
\begin{eqnarray}
\label{threshold2}
\frac{\alpha_{l,j}}{1-\alpha_{l,j}} = \left(\frac{D_{l-1}+\sum_{k=1}^j\omega_{l,(k)}}{\sum_{k=j}^{n_l^*}\omega_{l,(k)}} \times q_l \right)\bigwedge \frac{\tau_0}{1-\tau_0}.
\end{eqnarray}
Theorem 2 (proved in Appendix A.2) asserts that the bottom-up procedure with thresholds \eqref{threshold2} controls FAR \eqref{eq1.2} at value $ \leq q$.
\begin{theorem}{2}
Under Conditions (C1), (C2) and (C3) in Theorem 1, the bottom-up procedure with thresholds \eqref{threshold2} ensures that the value of the FAR given in equation \eqref{eq1.2} is $\leq q$.
\end{theorem}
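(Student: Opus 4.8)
The plan is to reduce this multi-level weighted statement to a sequence of single-level bounds that are tied together by the alpha-spending allocation $\sum_l q_l = q$. First I would condition on $\mF_{l-1}$, the information generated by all detection events at levels $1,\dots,l-1$. Under (C1) the invariance result of Appendix A.2 makes the sorted weights $\omega_{l,(1)}\le\cdots\le\omega_{l,(n_l^*)}$ deterministic given $\mF_{l-1}$, so the thresholds \eqref{threshold2} are fixed constants; under (C2)--(C3) together with the Stouffer construction of Section 2.3, the $p$-values of the null nodes at level $l$ are i.i.d.\ $\mathrm{U}[0,1]$ and independent of $\mF_{l-1}$. Write $A_l=\sum_{j}\omega_{l,j}V_{l,j}^{m}$ and let $D_l$ denote the cumulative \emph{weighted} detection count through level $l$ (consistent with the $D_{l-1}$ in \eqref{threshold2}), so that the denominator of \eqref{eq1.2} is $D_L$. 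Because $D_l$ is nondecreasing in $l$ and each $A_l\ge 0$, we have $1/(D_L\vee 1)\le 1/(D_l\vee 1)$ and therefore
\[
\mathrm{FAR}=\bE\!\left[\frac{\sum_{l=1}^L A_l}{D_L\vee 1}\right]\le\sum_{l=1}^L\bE\!\left[\bE\!\left[\frac{A_l}{D_l\vee 1}\,\Big|\,\mF_{l-1}\right]\right].
\]
It then suffices to prove the single-level bound $\bE[A_l/(D_l\vee 1)\mid\mF_{l-1}]\le q_l$; summing and using $\sum_l q_l=q$ finishes the argument.

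For the single-level bound I would treat level $l$ as a weighted adaptive step-down problem with fixed offset $D_{l-1}$. Ordering the $n_l^*$ testable nodes by ascending $p$-value and letting $d$ be the step-down stopping index, the realized weighted detection count at this level is $\sum_{k\le d} w_k$, where $w_k$ is the cascade weight of the $k$-th smallest-$p$-value node (the realized $\omega_{l,k}$ in $p$-value order). The delicate point is that the thresholds are built from the magnitude-sorted weights through $W_j=\sum_{k\le j}\omega_{l,(k)}$ and $T_j=\sum_{k\ge j}\omega_{l,(k)}$, whereas the counts in \eqref{eq1.2} use the $p$-value-ordered $w_k$. I would reconcile the two with the elementary domination $\sum_{k\le d} w_k\ge W_d$ (any $d$ weights sum to at least the $d$ smallest), which gives $D_l\ge D_{l-1}+W_d$ and hence $A_l/(D_l\vee 1)\le A_l/((D_{l-1}+W_d)\vee 1)$. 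This replaces the random realized count by the deterministic threshold count at the rejection boundary, so the odds identity $\alpha_{l,d}/(1-\alpha_{l,d})=(D_{l-1}+W_d)\,q_l/T_d$ can be used directly.

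With the denominator expressed through the threshold count, I would close the bound by the same supermartingale/optional-stopping device that underlies Theorem 1 (the Gavrilov--Benjamini--Sarkar step-down argument), now carried out on the weighted, grouped increments: as $j$ advances, a rejected null node contributes its weight to both the accumulated weighted false count and the running count $D_{l-1}+W_j$, and the odds form of \eqref{threshold2} is precisely the increment that keeps the relevant ratio a supermartingale with drift $q_l$ under the i.i.d.\ uniform nulls. Optional stopping at the data-determined index $d$ yields $\bE[A_l/((D_{l-1}+W_d)\vee1)\mid\mF_{l-1}]\le q_l$. The $\tau_0$ cap only decreases each $\alpha_{l,j}$, turning the boundary odds identity into a one-sided inequality in the favorable direction, and is thus absorbed into the supermartingale bound rather than affecting it.

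The main obstacle I anticipate is exactly this single-level weighted step-down control, namely adapting the Theorem 1 argument to (i) a nonzero offset $D_{l-1}$ that folds previously-accumulated detections into the running index, and (ii) the mismatch between the magnitude-sorted weights used in the thresholds and the $p$-value-ordered cascade weights recorded in \eqref{eq1.2}. The invariance of the sorted weights is what makes the thresholds deterministic given $\mF_{l-1}$, and the domination $\sum_{k\le d} w_k\ge W_d$ is what lets these conservative sorted-weight thresholds still dominate the realized detection counts; verifying that the step-down drift identity survives the grouping induced by the weights is where the real work lies.
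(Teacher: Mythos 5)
Your global architecture matches the paper's: the same level-wise decomposition $\mathrm{FAR}\le\sum_{l}\bE\{\bE[A_l/(D_{l-1}+(R_l\vee 1))\mid\mF_{l-1}]\}$, the same reduction to a per-level bound of $q_l$, and the same domination step (the paper's $c_k=\sum_{j\le k}\omega_{l,(j)}\le C_k$) that lets the deterministic sorted-weight counts in the thresholds stand in for the realized weighted counts in the denominator. The genuine gap is in your treatment of the numerator. Your supermartingale accounting has "a rejected null node contribute its weight" as though that weight were a fixed quantity (given $\mF_{l-1}$ and the ordering), but the weight of a null node $j$ is itself a function of the level-$l$ $p$-values \emph{including} $p_j$: writing $T_1,\dots,T_t$ for the subtrees containing node $j$, the paper shows $\omega_j=1+\sum_{s=1}^{t}\bI\bigl(p_j>p^{(-j)}_{T_s}\bigr)$, so $\omega_j$ is nondecreasing in $p_j$ and hence coupled with the very rejection event $\{p_j\le\alpha\}$ whose weighted contribution you must bound. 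No optional-stopping argument with deterministic weight increments applies to $\bE[\omega_j\bI(\text{reject }j)]$; one needs a dedicated inequality quantifying this coupling. That is exactly the paper's key new ingredient (its Lemma 2): for a null $p_j$ independent of the other level-$l$ $p$-values and any Borel event $\mB^{(-j)}$ of those other $p$-values,
\begin{equation*}
\bE\left[\omega_j\bI\left(\mB^{(-j)},\,p_j\le\alpha\right)\right]\;\le\;\frac{\alpha}{1-\alpha}\,\bE\left[\omega_j\bI\left(\mB^{(-j)},\,p_j>\alpha\right)\right],
\end{equation*}
proved by a mean-value-theorem argument that exploits the monotonicity of $\omega_j$ in $p_j$; the odds form $\alpha_{l,j}/(1-\alpha_{l,j})$ of the thresholds \eqref{threshold2} is designed precisely to plug into this inequality. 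The coupling happens to be favorable (a null node's weight tends to be small when it is falsely rejected), but that is a fact requiring proof, not something absorbed "for free" into a drift computation.

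Two further points of comparison. First, the paper's mechanism for reusing the Theorem 1 machinery is to condition on the full ordering $\mathcal{O}$ of the level-$l$ $p$-values (not merely on $\mF_{l-1}$), which makes the realized weights and cumulative counts $C_k$ constants inside the inner expectation; it then un-conditions and applies the weight inequality above, finishing with the observation that the total weight of the nodes carrying the $n^*-k+1$ largest $p$-values is at most $\overline{c}_k=\sum_{j\ge k}\omega_{l,(j)}$, which cancels against the $\overline{c}_k$ in the thresholds. Your sketch would need both of these steps. Second, your description of the Theorem 1 engine as a supermartingale/optional-stopping device does not match the paper's (or Gavrilov et al.'s) actual argument, which is an algebraic summation-by-parts manipulation; that by itself is only a presentational mismatch, but it matters here because the object you would need to verify is a supermartingale property for increments $\omega_j\bI(\cdot)$ with random, $p_j$-dependent weights---and that verification is, once made precise, the content of Lemma 2 that your proposal is missing.
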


\bigskip

\noindent{\large\bf 2.5 Bottom-up Testing on Incomplete Trees}

In Sections 2.3--2.4, we only considered complete trees where nodes on the same level all have the same depth. Here we consider more general trees where depth and level do not coincide. For example, in the tree from Figure \ref{tree}(b), nodes $N_{1,3}$ and $N_{1,4}$ have different depth from the other leaf nodes, although they are all on the same level.  In the microbiome example, this would occur whenever some of the lower taxonomic ranks (e.g., species and genus) of an OTU are not known.  One possible solution is to fill in the missing levels by assigning each such OTU its own (unknown) species and (unknown) genus. This is unsatisfactory both scientifically, as we then assert that the ``correct" species and genus for this OTU is different from any other OTU, and statistically, as the $p$-value for the species and genus level tests are necessarily identical to the $p$-value for the OTU. Although this may seem similar to the situation addressed in Section (2.4) where a single test could determine multiple hypotheses, it is actually different in that there is no additional information gained at the species or genus level in this example. A better alternative is to place each leaf node at the level just below the nearest inner nodes. However, this strategy can still be unsatisfactory for applications such as the microbiome, as it is scientifically questionable to treat some OTUs at the same level as higher taxa such as families. As a result, we describe here how our approach can be extended to incomplete trees.

For an incomplete tree, the sorted weights $(\omega_{l,(1)}, \omega_{l,(2)}, \ldots, \omega_{l,(n_l^*)})$ are no longer unique. In Figure \ref{tree}(b), when $N_{1,6}$ has the largest $p$-value among all leaf nodes, the sorted weights at level 1 are $(1,1,1,1,2,3)$; if $N_{1,4}$ has the largest $p$-value, the sorted weights are $(1,1,1,2,2,2)$. To account for this ambiguity, we seek a single set of sorted weights that will control FAR for any possible ordering of \textit{p}-values.  For the two sets of weights just considered, note that the cumulative sums of sorted weights $\sum_{k=1}^j\omega_{l,(k)}$ for the first set, given by $(1,2,3,4,6,9)$ are all less than or equal to the cumulative sums of the sorted weights of the second set, given by $(1,2,3,5,7,9)$.  Thus, if we were to use the first set of ordered weights in (4), the thresholds $\alpha_{l,j}$ would be smaller than the thresholds calculated using the second set of sorted weights.  In Appendix A.3, we show how to find a unique set of sorted weights $\widetilde{\omega}_{l,(1)}\le\cdots\le\widetilde{\omega}_{l,(n_l^*)}$ for level $l$ that correspond to weights obtained by some ordering of $p$-values and that satisfy the inequalities 
$$
\sum_{k=1}^j\widetilde{\omega}_{l,(k)} \le \sum_{k=1}^j\omega_{l,(k)}, ~~j=1,\ldots,n_l^*,
$$ 
for all possible sets of sorted weights $(\omega_{l,(1)},\ldots,\omega_{l,(n_l^*)})$ induced by different orderings of $p$-values. We then adopt thresholds calculated using $\{\widetilde{\omega}_{l,(k)}\}$, given by
\begin{eqnarray}
\label{threshold3}
\frac{\alpha_{l,j}}{1-\alpha_{l,j}} = \left(\frac{D_{l-1}+\sum_{k=1}^j \widetilde{\omega}_{l,(k)}}{\sum_{k=j}^{n_l^*}\widetilde{\omega}_{l,(k)}} \times q_l \right)\bigwedge \frac{\tau_0}{1-\tau_0}.
\end{eqnarray}
Because these thresholds are the most stringent among those based on any possible weights $(\omega_{l,(1)},\ldots,\omega_{l,(n_l^*)})$, we call them the least favorable weights. Theorem 3 ensures control of the FAR using the least favorable weights. 
\begin{theorem}{3}
Under Conditions (C2) and (C3) in Theorem 1, the bottom-up procedure with thresholds \eqref{threshold3} ensures the FAR defined in \eqref{eq1.2} is $\leq q$.
\end{theorem}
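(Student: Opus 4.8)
The plan is to reduce Theorem 3 to the complete-tree result of Theorem 2 by showing that the least favorable weights convert the data-dependent thresholds into deterministic, uniformly more stringent ones, and then to rerun the FAR-control argument of Theorem 2 with these deterministic thresholds. Conditions (C2) and (C3) enter exactly as before: null $p$-values are uniform and independent, so the adjusted $p$-values $p'_{l,k}$ are uniform on $[0,1]$ and the Stouffer scores $Z_{l+1,j}$ are standard normal and mutually independent across the disjoint sets $\mathcal{U}_{l+1,j}$; these facts supply the induction up the tree by certifying that the level-$(l+1)$ $p$-values again satisfy (C2) and (C3) given the detections below. The only genuinely new ingredient is the handling of the weights, since dropping (C1) means the realized sorted weights $\omega_{l,(1)}\le\cdots\le\omega_{l,(n_l^*)}$ are no longer unique.

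First I would record the two structural facts about weights that make the reduction work. The total weight $W_l=\sum_{k=1}^{n_l^*}\omega_{l,(k)}$ counts every node that becomes detected once all tested nodes at level $l$ are rejected, so it is determined by the tree topology alone and is invariant across orderings of the $p$-values (in Figure \ref{tree}(b) both weight sequences sum to $9$). Combining this invariance with the defining partial-sum domination $\sum_{k=1}^j\widetilde{\omega}_{l,(k)}\le\sum_{k=1}^j\omega_{l,(k)}$ yields the reverse tail inequality $\sum_{k=j}^{n_l^*}\widetilde{\omega}_{l,(k)}\ge\sum_{k=j}^{n_l^*}\omega_{l,(k)}$. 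Substituting these into \eqref{threshold3} versus \eqref{threshold2} shows that, for every realization and every $j$, the least favorable threshold is no larger than the threshold computed from the realized sorted weights, confirming that \eqref{threshold3} is the most stringent. Because $\{\widetilde{\omega}_{l,(k)}\}$ depends only on the topology and not on the $p$-values, the thresholds \eqref{threshold3} are deterministic, so the level-$l$ test is a bona fide step-down procedure with fixed critical values to which the Theorem 2 machinery applies.

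With the thresholds now deterministic, I would run the level-by-level argument of Theorem 2, letting the cumulative term $D_{l-1}$ in \eqref{threshold3} thread the weighted detection count from lower levels into the level-$l$ threshold so that the per-level bounds telescope into the single global ratio \eqref{eq1.2}, with the $q_l$ assembling into $q$. The one step that must be re-examined is the bound on the level-$l$ weighted false-assignment contribution, because the FAR \eqref{eq1.2} is computed with the \emph{true} realized weights $\omega_{l,j}$ while the thresholds carry the \emph{least favorable} weights $\widetilde{\omega}_{l,(k)}$. The two weight inequalities from the previous paragraph enter in exactly the directions needed: the smaller numerator weights shrink the rejection region and hence the expected weighted count of false assignments, while the larger tail weights keep the controlling ratio at the realized stopping point bounded by $q_l$. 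Verifying that these substitutions preserve every inequality in the Theorem 2 chain—rather than merely asserting that a smaller rejection set inherits FAR control—is the main obstacle, since shrinking the rejection region lowers both the numerator and the denominator of the FAR and so is not controlled by ratio monotonicity alone. Once this pathwise bound is established for each ordering, taking expectations and summing over levels gives FAR $\le q$.
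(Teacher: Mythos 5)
Your proposal takes essentially the same route as the paper: Appendix A.4 proves Theorem 3 by rerunning the proof of Theorem 2 with the least-favorable partial sums $\widetilde{c}_k=\sum_{j=1}^{k}\widetilde{\omega}_{(j)}$ and tail sums $\overline{\widetilde{c}}_k=\sum_{j=k}^{n^*}\widetilde{\omega}_{(j)}$ substituted for $c_k$ and $\overline{c}_k$ in thresholds \eqref{threshold3}, which is exactly your plan, and the two inequalities you isolate (partial-sum domination $\widetilde{c}_k\le c_k\le C_k$, and the tail-sum domination $\overline{\widetilde{c}}_k\ge \overline{c}_k$ obtained from total-weight invariance) are precisely what makes each step of that chain go through with the new, deterministic thresholds. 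If anything, your write-up is more explicit than the paper's, which leaves the total-weight invariance and the resulting tail-sum direction implicit in its one-line substitution argument.
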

The proof of Theorem 3 can be found in Appendix A.4.  When a tree is complete, the least favorable weights reduce to the unique sorted weights regardless of the ordering of $p$-values. Thus the testing procedure presented here encompasses the one presented in Section 2.4 as a special case.

\bigskip

\noindent{\large\bf 2.6 Bottom-up Testing with Separate FAR Control}

The testing procedures we have described so far assume that we wish to detect nodes at all levels of the tree while controlling the overall FAR at some level $q$.  In some situations we may want to conduct a separate analysis of leaf nodes and inner nodes.  For example, we may wish to first determine which OTUs are detected while controlling FAR at some level $q_1$; then we may wish to conduct a second, separate analysis of taxa starting at the species level and continuing up the phylogenetic tree, while controlling the FAR of the second analysis at some level $q_{-1}$.

The procedures presented in Sections 2.4 and 2.5 do not guarantee that the FAR at each level $l$ is controlled at level $q_l$ because of the cumulative effect of $D_{l-1}$ in \eqref{threshold1}, \eqref{threshold2} and \eqref{threshold3}, which establishes a dependence between the nodes detected at each level.  If we break this dependence by re-starting the counter at some level, it is then possible to separately control FAR above and below this level. Here, we illustrate our proposal by showing how to control FAR at level 1 to have value $\leq q_1$ while simultaneously controlling FAR at all remaining (higher) levels at value $\leq q_{-1}$. 
To accomplish this, we propose a two-stage procedure. At stage 1, we perform the step-down test for level 1 with thresholds $\{\alpha_{1,j}, j = 1,2,\ldots,n_1\}$ that satisfy 
$$
\frac{ \alpha_{1,j}}{1-\alpha_{1,j}} = \left( \frac{\sum_{k=1}^j \widetilde{\omega}_{1,(k)}}{\sum_{k=j}^{n_1}\widetilde{\omega}_{1,(k)}}\times q_1 \right)\bigwedge \frac{\tau_0}{1-\tau_0}.
$$
Note that, if the same value of $q_1$ is used, these are the same thresholds for level $1$ as the one-stage procedure described in the previous section.  The use of weights $\{\widetilde{\omega}_{1,(k)}\}$ to account for multiplicity allows us to include in our list of detected nodes at level 1 all higher-level nodes that are detected after testing level 1 nodes.  Thus, the FAR at level 1 is written as 
$$
\mathrm{FAR_{otu}} 
= \mathbb{E} \left[\frac{\sum_{j=1}^{n_1} {\omega}_{1,j}V_{1,j}^{m}}{\left(\sum_{j=1}^{n_1} {\omega}_{1,j}R_{1,j}\right)\bigvee 1}\right],
$$

At stage 2, we then apply the one-stage procedure proposed in Section 2.4 or 2.5 to the tree obtained by removing all leaves (OTUs) as well as those higher-level taxa that were detected at stage 1. In this tree, undetected nodes at level 2 are now the leaves, and the $p$-values for these new leaves are calculated by aggregating the $p$-values from the undetected OTUs.  We use the thresholds $\{\alpha_{l,j}, l=2,\ldots,L, j=1,\ldots,n_l^*\}$ that satisfy 
$$
\frac{ \alpha_{l,j}}{1-\alpha_{l,j}} = \left( \frac{D^{\dagger}_{l-1}+\sum_{k=1}^j \widetilde{\omega}_{l,(k)}}{\sum_{k=j}^{n_l^*}\widetilde{\omega}_{l,(k)}}\times q_l\right)\bigwedge \frac{\tau_0}{1-\tau_0},
$$
where $D^{\dagger}_{l-1}=\sum_{l^{\prime}=2}^l d^*_{l^{\prime}}$ differs from $D_{l-1}$ in that $D^{\dagger}_{l-1}$ counts the detected nodes starting from the $2^\text{nd}$ level. Using $D^{\dagger}_{l-1}$ in place of $D_{l-1}$ cuts the dependence between level 1 and the remaining levels and also makes the thresholds more stringent if $q_l$s stay the same as those in the one-stage procedure.  Thus, the FAR we wish to control at the remaining (higher) levels is 
$$
\mathrm{FAR_{taxa}} 
=  \mathbb{E}\left[ \frac{\sum_{l=2}^{L}  \sum_{j=1}^{n_l^*} {\omega}_{l,j}V_{l,j}^{m}}{\left(\sum_{l=2}^{L} \sum_{j=1}^{n_l^*} {\omega}_{l,j}R_{l,j}\right) \bigvee 1}\right],
$$
where, as before, $n_l^*$ excludes nodes detected using information from level $1$ to $l-1$.  Theorem 4 states that this two-stage procedure serves our purpose.
\begin{theorem}{4}
Under Conditions (C2) and (C3) in Theorem 1, the above two-stage procedure ensures that $\mathrm{FAR_{otu}} \leq q_1$ and $\mathrm{FAR_{taxa}} \leq q_{-1}= \sum_{l=2}^L q_l$.
\end{theorem}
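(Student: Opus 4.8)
The plan is to reduce Theorem 4 to Theorem 3 by showing that the modified counter $D^\dagger_{l-1}$ severs every dependence between the level-1 analysis and the analysis of the higher levels, so that the two claims can be proved separately. The key structural observation is that the detections of the full two-stage run are partitioned cleanly: a higher-level taxon detected by propagation from level 1 (a node whose last undetected offspring is a detected OTU) is credited to stage 1 through the weight $\omega_{1,j}$ and is then removed from the tree before stage 2 begins. Consequently the numerators and denominators of $\mathrm{FAR_{otu}}$ and $\mathrm{FAR_{taxa}}$ involve disjoint collections of tests, so it suffices to bound each ratio on its own, with no risk of double counting a propagated detection across the two rates.

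For the claim $\mathrm{FAR_{otu}}\le q_1$, I would observe that stage 1 is exactly the weighted step-down procedure of Sections 2.4--2.5 run on level 1 in isolation, with the counter initialized at $D_0=0$, least favorable weights $\widetilde{\omega}_{1,(k)}$, and total budget $q=q_1$. The weights $\omega_{1,j}$ already account for the higher-level nodes that a single level-1 rejection can produce, so $\mathrm{FAR_{otu}}$ is precisely the weighted false-assignment rate of the form controlled in Theorem 3, but for a one-level problem. Invoking Theorem 3 in this single-level setting (its base level $l=1$, where the counter has not yet accumulated) then yields $\mathrm{FAR_{otu}}\le q_1$ directly.

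For the claim $\mathrm{FAR_{taxa}}\le q_{-1}$, I would argue conditionally on the stage-1 outcome, i.e.\ on the realized set of undetected OTUs together with the truncation point $\alpha_{1,d^*_1+1}$. Given this outcome the reduced tree $\mathcal{T}'$ (the original tree with all leaves and all stage-1-detected taxa removed) is fixed, and its new leaves are the undetected level-2 nodes, whose $p$-values are the Stouffer aggregations of Section 2.3. By the argument there, under the modified null these aggregated $p$-values are uniform on $[0,1]$ and, because the sets $\mathcal{U}_{2,j}$ are pairwise disjoint, mutually independent; hence Conditions (C2) and (C3) hold for $\mathcal{T}'$. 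Because stage 2 uses $D^\dagger_{l-1}$, which counts only detections at levels $2,\dots,L$, the stage-2 thresholds coincide with the one-stage thresholds of Theorem 3 applied to $\mathcal{T}'$ with counter started at zero and level budgets $q_2,\dots,q_L$ summing to $q_{-1}$. Theorem 3 therefore gives $\mathbb{E}\!\left[\,\cdot \mid \text{stage 1}\right]\le q_{-1}$ for the conditional weighted false-assignment rate; since this bound holds for every stage-1 outcome, taking the expectation over stage 1 yields $\mathrm{FAR_{taxa}}\le q_{-1}$.

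The main obstacle is the conditioning step in the second claim. I must verify that conditioning on the stage-1 detection pattern does not disturb the uniformity and independence that Theorem 3 demands of the new-leaf $p$-values, and that replacing $D_{l-1}$ by $D^\dagger_{l-1}$ really does render the stage-2 thresholds measurable with respect to stage-2 information alone. It is exactly this counter reset that makes the conditional application of Theorem 3 legitimate: had stage 2 retained $D_{l-1}$, the level-1 count $d^*_1$ would enter the higher-level thresholds and the two analyses could not be decoupled. Checking that the disjoint-$\mathcal{U}$ independence survives the conditioning, and that the clean partition of detections holds regardless of the stage-1 realization, is the delicate part; the remaining bookkeeping then follows the proofs of Theorems 2 and 3 verbatim.
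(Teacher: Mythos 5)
Your proposal is correct and takes essentially the same route as the paper: the paper also gets $\mathrm{FAR_{otu}} \leq q_1$ by noting stage 1 coincides with the one-stage level-1 test (the key-step inequality with the counter at zero), and proves $\mathrm{FAR_{taxa}} \leq q_{-1}$ by rerunning the Theorem 2/3 argument with $D^{\dagger}_{l-1}$ and the stage-2 detection history $\mathcal{G}^{\dagger}_{l-1}$ in place of $D_{l-1}$ and $\mathcal{G}_{l-1}$, which --- as the paper itself remarks in closing --- is the same as applying the one-stage procedure to the reduced tree, i.e., your conditional invocation of Theorem 3. The delicate points you flag (conditional uniformity and independence of the aggregated $p$-values, and measurability of the stage-2 thresholds after the counter reset) are resolved in the paper by exactly the reasoning you cite: the Stouffer adjustment of Section 2.3 and conditioning on detection events excluding the OTU level.
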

The proof of Theorem 4 is proved in Appendix A.5.  Note that the choice of $(q_1, q_2, \ldots, q_L)$ is at the user's discretion and not necessary to match those in the one-stage procedure. For example, we can set $q_1=q_{-1}=5\%$ and choose $q_l=q_{-1}n_l/\left(\sum_{l'=2}^L n_l\right)$ for $l=2,\ldots,L$. 

Although we have presented the example in which the FAR for leaf nodes are controlled separately from the FAR for inner nodes, in principal the approach described here could be used to divide the nodes into two groups at any level, simply by choosing where to zero the counter in $D_l$.  We could even apply the approach recursively to control FAR for more than two sets of levels, if desired.

\section{Simulation Studies}
\label{sec:simu}

We conducted simulation studies to assess the performance of our bottom-up tests, and to compare with three competing approaches: (1) the na\"{\i}ve approach that calculates the $p$-value for an inner node by aggregating $p$-values from \textit{all} leaf nodes that are its descendents, using Stouffer's Z-score method and applies the BH procedure on the collection of $p$-values from all nodes; (2) the top-down approach of \cite{yekutieli_hierarchical_2008} as implemented in the R package \texttt{structSSI} \citep{sankaran_structssi:_2014}, with $p$-values for inner nodes calculated in the same way as in the na\"{\i}ve approach; and (3) the conjunction-null test that assigns a $p$-value to an inner node by the \textit{largest} $p$-value from all offspring nodes (equivalently, the largest $p$-value from all corresponding leaf nodes) and applies the BH procedure as in the na\"{\i}ve approach. All methods take $p$-values at leaf nodes as input. The nominal level for all error rates was set to 10\%.

All simulations were conducted under the conjunction null.  We first selected a number of inner or leaf nodes to be driver nodes. Under the conjunction null, all offspring of driver nodes (including all its leaf nodes) are associated with the trait of interest. We independently sampled $p$-values for associated leaf nodes from distributions that have enriched probability at values close to zero. We used the Beta distribution $\text{Beta}(1/\beta,1)$ where $\beta>1$, which has a relatively heavy right tail (Figure \ref{simulated_pvalue}) and mimics the empirical distributions of $p$-values observed in the IBD data (Figure \ref{empirical_pvalue}). To assess the robustness of our results, we also considered sampling $p$-values from a Gaussian-tailed model frequently used to study the performance of FDR procedures \citep{storey2002direct, barber2017p, javanmard2018online}, which has a smaller right tail (Figure \ref{simulated_pvalue}). Specifically, we first drew values $X_{l,j}\sim N (\beta, 1)$ and then obtained the $p$-value $p_{l,j}=1 - \Phi(X_{l,j})$, where $\Phi$ is the standard normal cumulative distribution function. In both models, $\beta$ characterizes the effect size of the trait on the microbiome. For all simulations we assumed that all leaf nodes that were not descendants of driver nodes were null with $p$-values sampled independently from the $\mathrm{U}[0,1]$ distribution.

\begin{figure}
\centering
\includegraphics[width=1.0\textwidth]{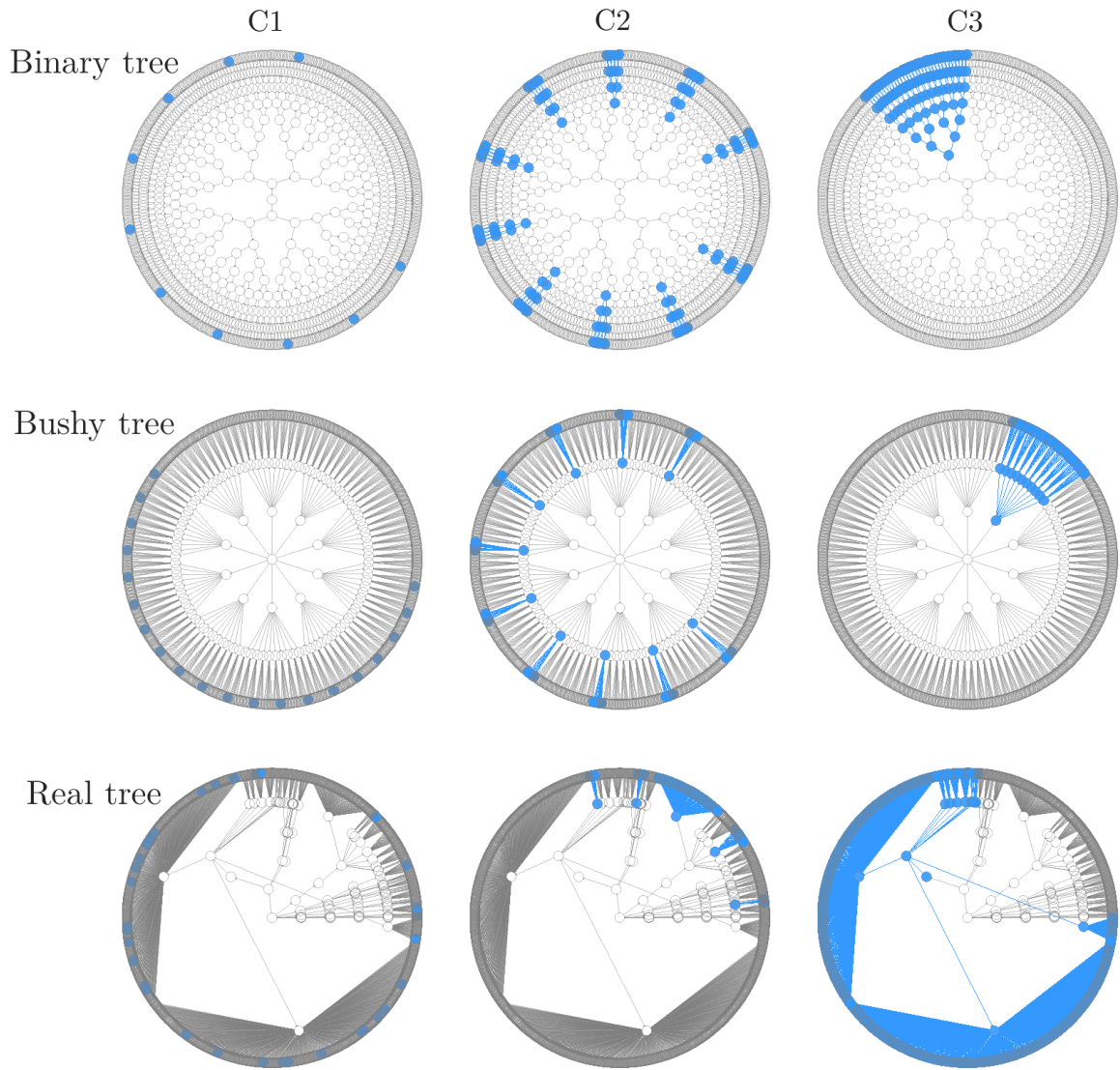}
\caption{\label{fig:causality}{The three tree structures (binary, bushy, and real) and three causal patterns  (C1, C2, and C3) for simulation studies. The real phylogenetic tree structure was obtained from the IBD data and, for simplicity of exposition, skipped the genus and species levels which have extensive missing assignments. The root node is always located at the center of each tree and the leaf nodes are represented by the outermost ring. The top of each blue subtree is a designated driver node, which can be an inner or leaf node.}}
\end{figure}

We considered three tree structures, shown in Figure \ref{fig:causality}. The first is a complete binary tree with 2 children for each inner node and 10 levels, which has 1023 nodes of which 512 are leaves. The second is a complete `bushy' tree with 10 children for each inner node and 4 levels, which has 1111 nodes of which 1000 are leaves.  The third is a real phylogenetic tree \citep{halfvarson_dynamics_2017} that we apply our methods to in Section 4. This tree has 8 levels, 249 inner nodes, and 2360 leaf nodes, with large variation in the number of child nodes at different inner nodes. It is incomplete, having extensive ($>$ 50\%) missing assignments at the genus and species levels, and a few at the family level.

For each tree structure, we then considered three causal patterns, differentiated by the level of the selected driver nodes (Figure \ref{fig:causality}). The first pattern (C1) is characterized by sparse driver nodes located at the leaf nodes; the second (C2) by several driver nodes located at an intermediate level, chosen so that $\sim$10\% of leaf nodes were associated; and the third (C3) by a single driver node at a higher level, inducing association with a large subtree.  In particular, for C1 we randomly selected 10/20/36 leaf nodes for the binary/bushy/real trees. For C2, we randomly chosen 10 out of 64 nodes at level 4 for the binary tree, 10 out of 100 nodes at level 2 for the bushy tree, and 5 out of 48 nodes at the family level (level 4) for the real tree. For C3, we randomly picked one node at level 7 for the binary tree, one node at level 3 for the bushy tree, and for the real tree, the class \textit{Clostridia} from level 6 (covering $\sim$80\% of all leaf nodes).  Each of the $3 \times 3=9$ scenarios was replicated 1000 times.

\bigskip

\noindent{\large\bf 3.1 Error Rates}

We evaluated each procedure by calculating its FAR (under the modified null), FDR (under the global null), and its FDRc (under the conjunction null). The FAR of the top-down and na\"{\i}ve approaches were calculated by using the list of detected nodes (as determined by \cite{benjamini_controlling_1995} for the na\"{\i}ve method and \cite{yekutieli_hierarchical_2008} for the top-down method), then proceeding level by level as if these detections were the result of tests of the modified null hypothesis.  The FAR was then calculated using the weighted procedures of Section 2.4 for the binary and bushy trees and Section 2.5 for the real tree.

Figure \ref{error_rates} displays these results for the nine ($= 3 \times 3$) scenarios we considered, for the simulations that used the beta distribution for non-null leaves. In all 9 scenarios our methods, whether unweighted or weighted, always controlled FAR. In contrast, the na\"{\i}ve and top-down methods typically had inflated FAR, with the most severe inflation occurring in C1, where the driver nodes were simulated exclusively at leaf nodes, causing many higher-level nodes to be falsely detected by these methods. As expected, our methods also controlled FDR. The top-down method, although designed to control FDR, still yielded slightly inflated FDR occasionally; this is likely due to violation of the independence assumption between the $p$-value at a node and the $p$-values of all its ancestors, which is required by the top-down method \citep{yekutieli_hierarchical_2008}. The na\"{\i}ve method always controlled FDR because the BH procedure is known to be robust to such positive correlations. Despite a lack of theoretical results, we found that our methods (especially the weighted approach of Section 2.4) controlled FDRc reasonably well in all scenarios we considered. The na\"{\i}ve and top-down methods typically had inflated FDRc, and FDRc for these methods resembled their FAR, consistent with the notion that FAR approximates FDRc. The conjunction-null test controlled all error rates, as expected. Figure \ref{error_rates_gaussian} shows the same patterns of FAR, FDR, and FDRc for simulations based on the Gaussian-tailed model.

\begin{figure}
\centering
\includegraphics[width=1\textwidth]{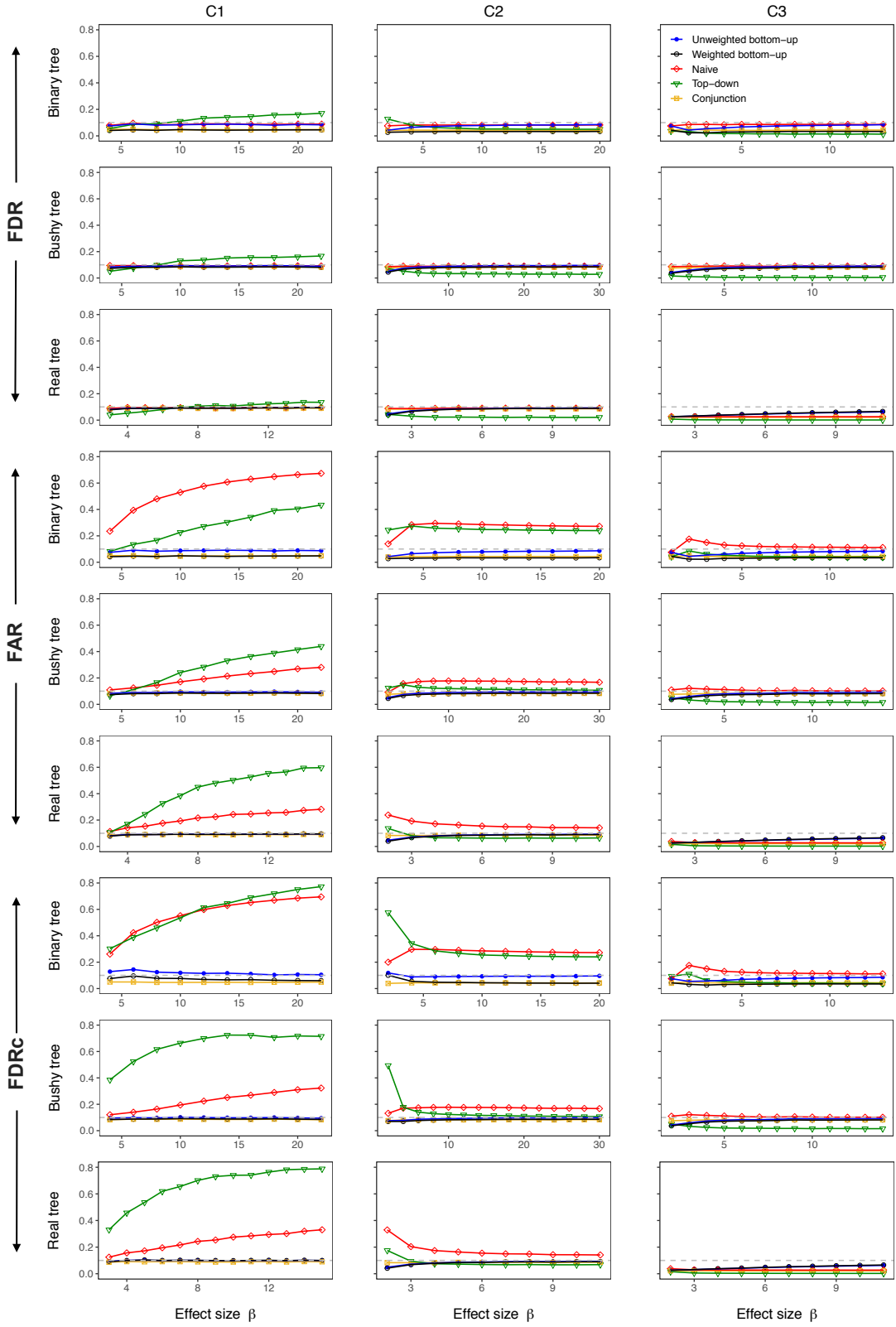}
\caption{\label{error_rates} Error rates for testing all nodes in the tree. The non-null $p$-values at leaf nodes were simulated from the beta distribution.}
\end{figure}

\bigskip

\noindent{\large\bf 3.2 Accuracy and Pinpointing Driver Nodes}

As our simulations were conducted under the conjunction null hypothesis, the driver nodes and all of its descendants are the truly associated nodes.  We measured accuracy by calculating a Jaccard similarity between the set of truly associated nodes and the nodes that are detected, for each method.  We used a \textit{weighted} Jaccard similarity to account for the branching-tree topology of the hypotheses we test, because detecting a node with offspring implies the offspring are detected in some sense, even if they were not individually detected. For example, identifying a genus as being associated with the trait of interest implies the species and OTUs that belong to this genus are associated, even if we did not detect them (and hence are not included in the list of detected nodes).  For this reason, we calculated the Jaccard similarity by weighting each node by the number of leaf nodes that are its descendants.  The weighted Jaccard similarity is then the sum of the weights of correctly-detected nodes, divided by the total weight assigned to either detected or truly associated nodes.

Examining Figure \ref{sensitivity} we see that the weighted bottom-up approach has the best or second-best accuracy (as measured by the weighted Jaccard similarity) in all cases we examined, making it the best overall choice.  The test of the conjunction hypothesis slightly outperforms our bottom-up tests when only leaf nodes (OTUs) are truly associated (simulation C1) as its natural conservatism is an advantage in this case.  As soon as inner nodes are truly associated, as in C2 or C3, the conjunction test becomes very conservative. The bottom-up approaches performed best for C2 and C3 except for the binary tree in C3, where the na\"{\i}ve and top-down approaches performed better at low parameter values.  When considering these results, it should also be noted the na\"{\i}ve method (but not the top-down method) had elevated FAR and FDRc for this simulation.  In general, the performance of our weighted bottom-up procedure was either equivalent to or superior to the unweighted procedure, presumably reflecting the ability of the weighted procedure to include all nodes that are identified when all offspring of some node are detected.

\begin{figure}
\centering
\includegraphics[width=0.8\textwidth]{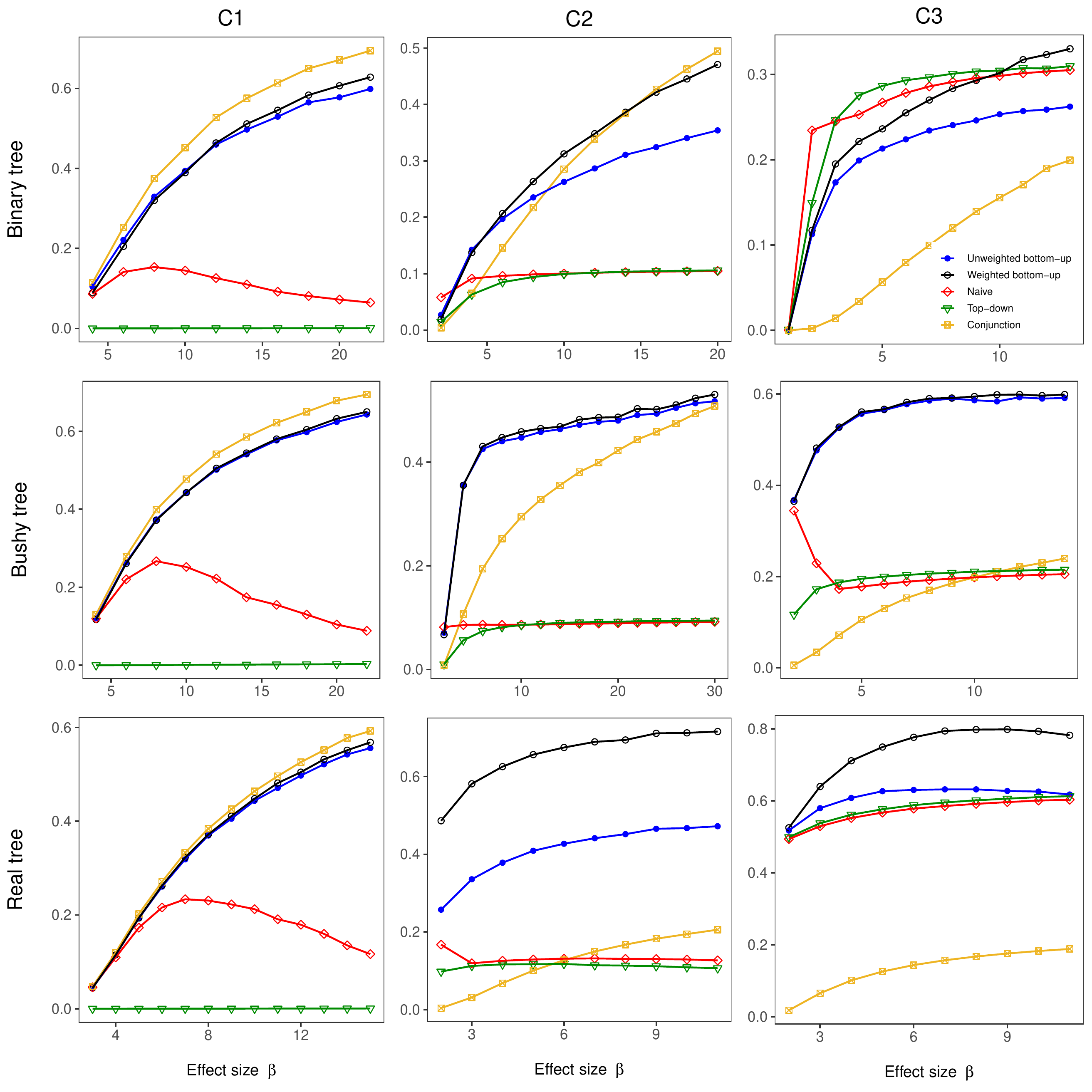}
\caption{\label{sensitivity} Accuracy (weighted Jaccard similarity) for detecting all associated nodes (including the driver nodes and all of their descendants at all levels). The non-null $p$-values at leaf nodes were simulated from the beta distribution.}
\end{figure}

Our methods are most different from existing methods in their ability to pinpoint driver nodes. We say a driver node is ``pinpointed" if it is detected to be associated \textit{and} none of its ancestors are detected. We evaluated the percentage of driver nodes that were pinpointed and showed the results in Figures \ref{driver_nodes} and \ref{driver_nodes_gaussian}. In general, our weighted and unweighted methods pinpointed a similar number of driver nodes, and both detected many more driver nodes than the na\"{\i}ve, top-down, and conjunction-null methods. The na\"{\i}ve method pinpointed some driver nodes when the association signals were weak, but inevitably detected their ancestors as the signals became stronger. By definition, the top-down method must fail to pinpoint any driver node since it only tests nodes below the root node if the root node is detected. Note that the percentage of driver nodes pinpointed by our methods sometimes decreased as the effect size increased, because more (but not all) descendants were detected and removed from the statistic for the driver nodes, which thus aggregated less information. For the beta-based simulations, undetected driver nodes remained a possibility regardless of the effect size, as the beta distribution always generates a non-negligible portion of $p$-values that were close to one even when the effect size was extremely large.  For the Gaussian-tailed simulation, all driver nodes were eventually detected, as large $p$-values became more infrequent as the effect size increased.  We note that the conjunction-null test can easily fail to detect higher-level nodes (including driver nodes) when some offspring of these nodes have large $p$-values, as in our beta-based simulations.  Additionally, we note that the good accuracy (as measured by the weighted Jaccard similarity) of our approaches is related to their ability to pinpoint driver nodes.

\begin{figure}
\centering
\includegraphics[width=0.8\textwidth]{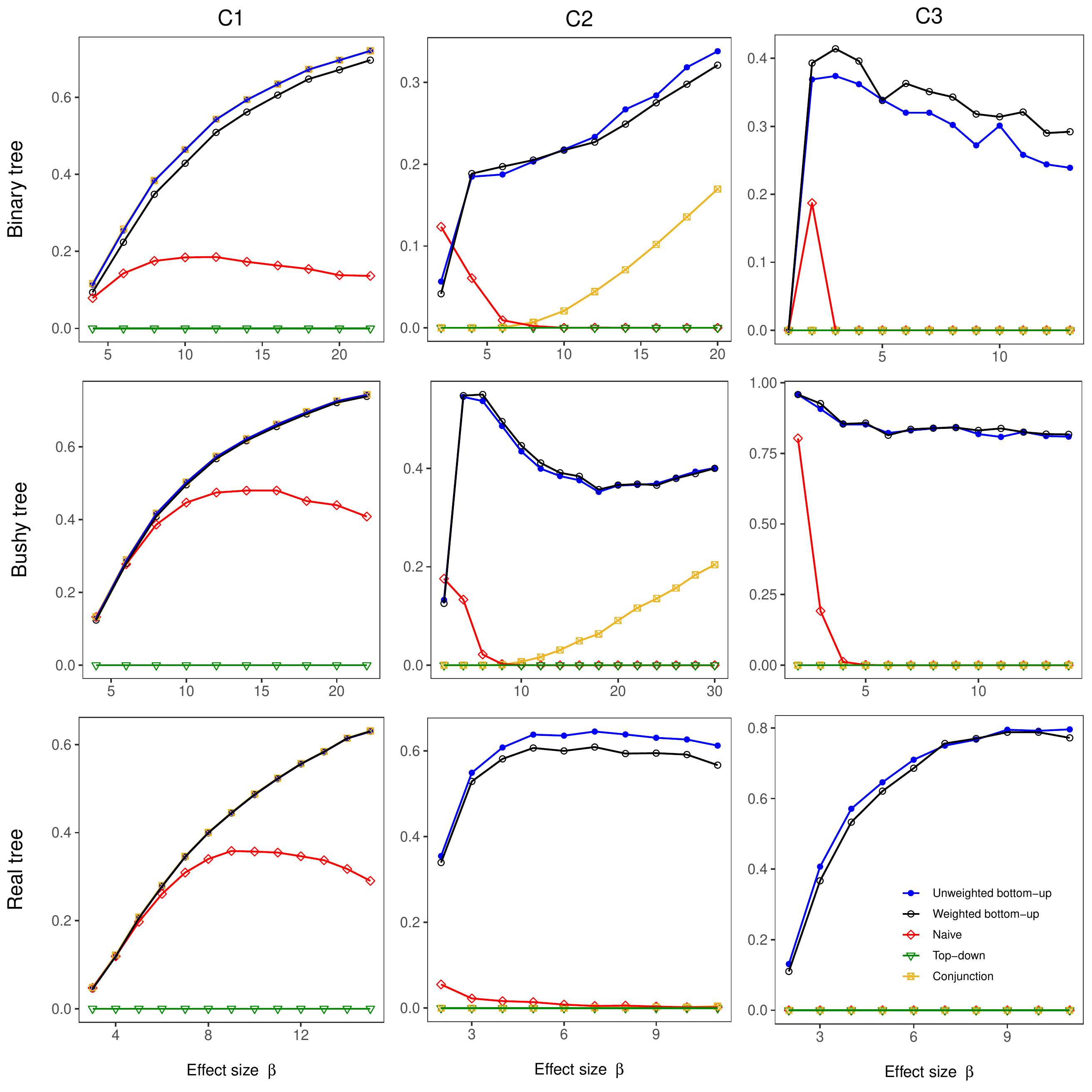}
\caption{\label{driver_nodes} Percentage of driver nodes that were pinpointed. The non-null $p$-values at leaf nodes were simulated from the beta distribution.}
\end{figure}

\section{IBD Data}
\label{sec:data1}
IBD is a chronic disease accompanied by inflammation in the human gut. The two most common subtypes are ulcerative colitis (UC) and Crohn's disease. \cite{halfvarson_dynamics_2017} investigated the longitudinal dynamics of the microbial community in an IBD cohort of 60 subjects with UC and 9 healthy controls. The microbial community was profiled by sequencing the V4 region of the 16S rRNA gene. Sequence data were processed into an OTU table through the QIIME pipeline. Our goal was to identify taxa that have differential abundance between the UC and control groups at baseline.

We removed OTUs that were present in fewer than 10 samples and dropped 4 OTUs that failed to be assigned any taxonomy. The assigned taxonomy grouped the 2360 OTUs into 249 taxonomic categories (i.e., inner nodes) corresponding to kingdom, phylum, class, order, family, genus, and species levels. Note that 15.2\%, 56.9\%, and 91.3\% OTUs have missing assignment at the family, genus, and species level, respectively. As there were no obvious confounders provided with these data, we used the Wilcoxon rank-sum test to compare the OTU frequencies between case and control groups to obtain $p$-values for each OTU.

\begin{figure}
\centering
\includegraphics[width=.865\textwidth]{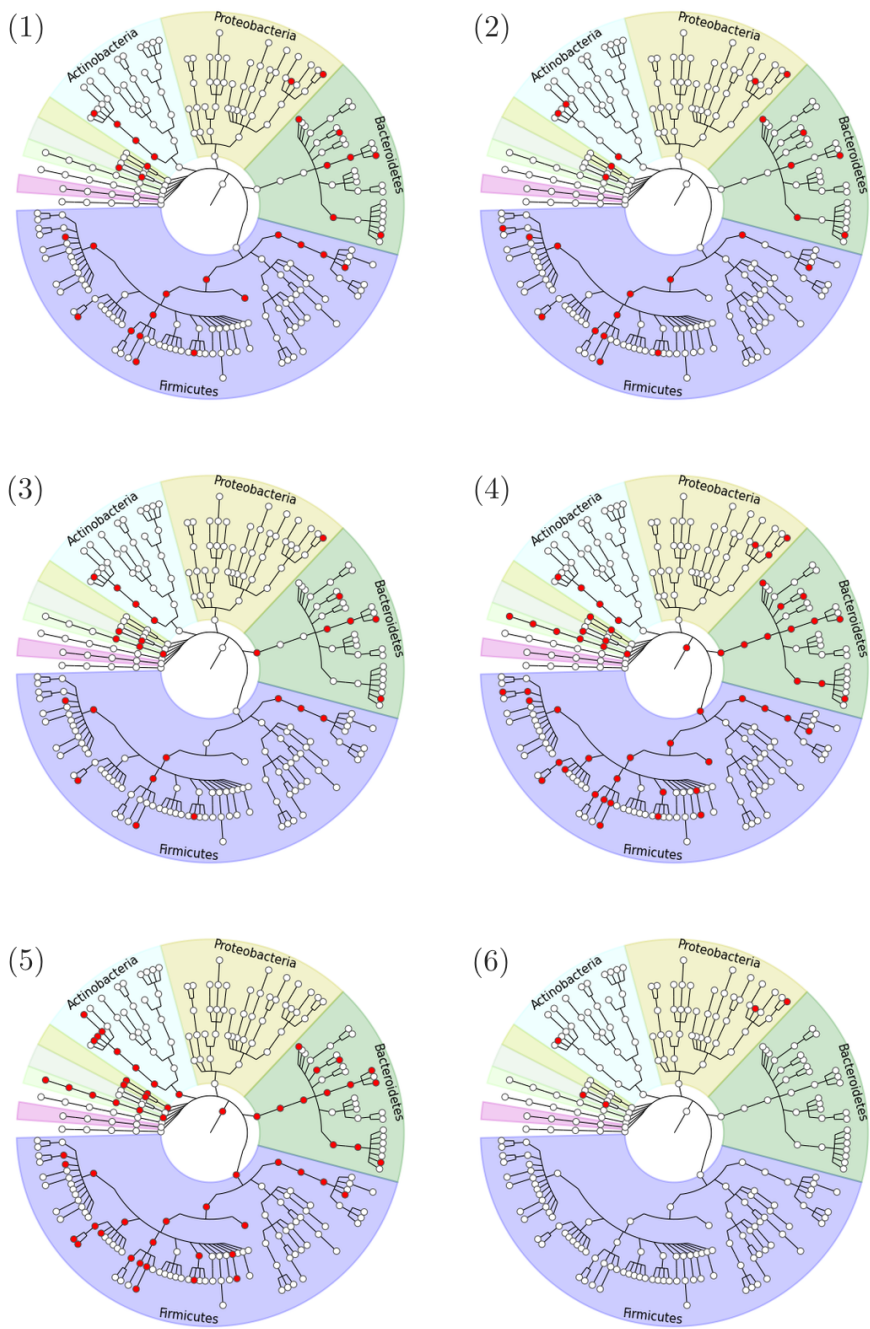}
\caption{\label{fig_UC_incomplete} Taxa (marked in red) detected to be differentially abundant between the UC and control groups in the IBD data by (1) the (one-stage) weighted bottom-up method, (2) the unweighted bottom-up method, (3) the two-stage, weighted bottom-up procedure with nominal FAR 5\% and 5\% for OTUs and taxa, respectively, (4) the na\"{\i}ve method, (5) the top-down method, and (6) the conjunction-null test. The levels from the center outward are kingdom, phylum, class, order, family, genus and species. The OTU level is supposed to be located at the outermost layer and has been omitted to simplify the figure. The plots were generated using GraPhlAn (\texttt{http://huttenhower.sph.harvard.edu/graphlan}).}
\end{figure}

We applied the two bottom-up methods with the nominal FAR of 10\% as well as the na\"{\i}ve and top-down methods with the nominal FDR of 10\%. The detected taxa can be visualized in Figure \ref{fig_UC_incomplete}. The weighted bottom-up test identified 143 OTUs, 5 species, 9 genera, 7 families, 5 orders, and 5 classes, among which the driver taxa were pinpointed at classes \textit{Chloroplast},  \textit{Clostridia},  \textit{Coriobacteriia}, \textit{Erysipelotrichi} and \textit{RF3}; families \textit{Bacteroidaceae}, \textit{Prevotellaceae} and \textit{S24-7};  genera \textit{Morganella} and  [\textit{Prevotella}]; and species \textit{radicincitans}; see Table \ref{tab:uchc} for more details. The unweighted procedure yielded an identical list of driver taxa.  In contrast, both the na\"{\i}ve and top-down methods identified the root node and many taxa at high levels, suggesting their inability to pinpoint the real driver taxa. In addition, the top-down method did not detect some lower-level taxa in phylum \textit{Proteobacteria} that were detected by all other methods. The conjunction test detected 142 OTUs but only 5 taxa, each of which contained a single OTU. All these results are consistent with the findings of our simulation studies.

We also applied the two-stage (weighted) bottom-up procedure to control FAR separately at the OTU level and the remaining taxa levels. We considered splitting the overall FAR 10\% into 5\% and 5\% for OTU and taxon analyses. At stage 1, we detected 81 OTUs and also included in our detection list 1 species, 1 genus, 1 order, and 1 class because they all contain only 1 OTU and those OTUs were detected; among these OTUs and taxa we can declare we control FAR at 5\% ($\sim$ 4 OTUs or taxa). At stage 2, we detected 4 species, 4 genera, 5 families, 4 orders, 3 classes, and 2 phyla, among which we can declare we control FAR at 5\% ($\sim$ 1 taxon). The detected driver taxa were phyla \textit{Bacteroidetes} and \textit{Verrucomicrobia}, classes \textit{Chloroplast}, \textit{Coriobacteriia} and \textit{Erysipelotrichi}, order \textit{Clostridiales}, and species \textit{radicincitans}. Note that the two FARs (for OTUs and taxa) do not have to add up to 10\% nor be equal.

\section{Discussion}
\label{sec:dis}
In this work, we presented a bottom-up approach to testing hypotheses that have a branching tree dependence structure. These procedures test hypotheses in a tree level by level, starting from the bottom and moving up, rather than starting from the top and moving down. We developed a novel modified null hypothesis, which is more suitable for our goal of detecting nodes in which a \textit{dense} set of child nodes are associated with the trait of interest. Accordingly, we developed a novel error criterion, the FAR, and provided procedures that we proved control FAR. Our simulation studies confirmed the control of FAR and demonstrated good performance of our methods compared to existing methods using a measure of accuracy based on a weighted Jaccard similarity. Further, our bottom-up methods are more successful at pinpointing driver nodes, offering highly interpretable results, while the existing methods frequently fail at this task. Finally, although our methods were not designed to control FDRc, our simulations showed that use of Stouffer's Z score to combine information leads to approximate control of FDRc as well.

Our methods can easily be extended to very general tree structures. We can easily handle trees in which the leaf nodes are not all at level 1. With some modifications, our methods can also be applied to trees with multiple (correlated) root nodes such as trees generated by pathways, by using our bottom-up testing procedure up to the level right below the root level and applying to the root level the standard BH procedure, which is robust to positive correlations. We expect this modified procedure to control FAR (and hence FDR and, approximately, FDRc).

Although our approach is very general, it does have some limitations or aspects that could benefit from further development.  First, we treated $p$-values at leaf nodes with equal weight. In some applications, different leaf nodes may have varying importance and may be weighted differently. Second, we partitioned the total error rate $q$ into $q_1,\ldots,q_L$ in proportion to the number of nodes at each level. It is unclear what the optimal partition would be. It is also of interest to consider alternative partitioning that can improve performance at pre-specified levels of particular importance if, for example, finding genera that were associated with a trait was of particular interest.  Further, we assumed independence between null leaf nodes because it is required by both Stouffer's method for combining $p$-values and the step-down procedure for controlling the error rate of decisions. It may also be of interest to extend our methods to account for correlations between leaf nodes, e.g., correlations between (null) OTUs.

Our methods have been implemented in the R package \texttt{BOUTH} (\underline{BO}ttom-\underline{U}p \underline{T}ree \underline{H}ypothesis testing), available on GitHub 
\href{https://github.com/yli1992/BOUTH}{links}.. Our program is computationally efficient as it only involves building the tree structure, calculating the thresholds, aggregating $p$-values for parent nodes, and sorting $p$-values. For example, for the one-stage weighted procedure on the IBD data, it took 1.3 seconds on a laptop with a 2.5 GHz Intel Core i7 processor and 8 GB RAM.

\section{Acknowledgements and Disclaimer}
This research was supported by the National Institutes of Health awards R01GM116065 (Hu). The findings and conclusions in this report are those of the authors and do not necessarily represent the official position of the Centers for Disease Control and Prevention.

\appendix 
\numberwithin{equation}{section}
\section*{Appendix}
\label{sec:latex}

\renewcommand{\theequation}{A\arabic{equation}}

\setcounter{equation}{0}

\noindent{\large\bf A.1 Proof of Theorem 1}

The following lemma is essentially the summation by parts formula and will be used in the proof of Theorem 1.
\begin{lemma}{1}
Suppose $\{a_1, \ldots,a_n\}$ and $\{b_1, \ldots,b_n\}$ are two sets of real numbers. Then,
$$
\sum_{k=1}^n a_kb_k = \sum_{k=1}^n (a_k-a_{k-1})B_k,
$$
where $B_k=\sum_{i=k}^n b_i$ and $a_0=0$.
\end{lemma}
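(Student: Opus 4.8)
The plan is to verify the identity directly by expanding the right-hand side and interchanging the order of summation. Since this is a finite, purely algebraic identity (the standard Abel summation-by-parts formula), no analytic machinery is needed; the only care required is in handling the boundary convention $a_0 = 0$.

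First I would substitute the definition $B_k = \sum_{i=k}^n b_i$ into the right-hand side, obtaining the double sum
$$
\sum_{k=1}^n (a_k - a_{k-1}) B_k = \sum_{k=1}^n \sum_{i=k}^n (a_k - a_{k-1}) b_i.
$$
Next I would interchange the order of summation. The index pairs $(k,i)$ range over the triangular region $1 \le k \le i \le n$, so summing over $i$ first (from $1$ to $n$) and then over $k$ from $1$ to $i$ gives
$$
\sum_{i=1}^n b_i \sum_{k=1}^i (a_k - a_{k-1}).
$$

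The key observation is that the inner sum telescopes: $\sum_{k=1}^i (a_k - a_{k-1}) = a_i - a_0 = a_i$, where the stated convention $a_0 = 0$ is exactly what removes the lower boundary term. Substituting this back yields $\sum_{i=1}^n a_i b_i$, which is the left-hand side, completing the proof.

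There is no genuine obstacle here. The only place to be careful is the bookkeeping when swapping the summation order and the role of $a_0 = 0$ in collapsing the telescoping sum. An equivalent route writes $b_k = B_k - B_{k+1}$ with the convention $B_{n+1} = 0$, splits the resulting expression into two sums, and shifts the index in one of them; this arrives at the same identity but requires instead tracking the vanishing top boundary term $B_{n+1}$, so I would prefer the summation-swap argument as the cleaner presentation.
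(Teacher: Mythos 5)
Your proof is correct. It differs mechanically from the paper's own argument: you start from the right-hand side, expand $B_k$ into a double sum over the triangular region $1 \le k \le i \le n$, interchange the order of summation, and collapse the inner telescoping sum $\sum_{k=1}^i (a_k - a_{k-1}) = a_i$ using $a_0 = 0$. The paper instead starts from the left-hand side, writes $b_k = B_k - B_{k+1}$ (with $B_{n+1}=0$), splits into two sums, and shifts the index in the second one so that the difference $a_k - a_{k-1}$ appears --- exactly the ``equivalent route'' you describe and set aside in your final paragraph. The two arguments are dual bookkeeping schemes for the same Abel summation identity: yours pushes the boundary convention onto $a_0$ and hides the top boundary inside the summation range, while the paper's pushes it onto $B_{n+1}$ and must reabsorb the leftover term $a_1 B_1$. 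Neither buys any real generality over the other; your summation-interchange version is arguably cleaner to check at a glance since there is only one boundary term to track, whereas the paper's version more directly mirrors how the lemma is invoked later (substituting a specific decomposition of $b_k$ inside an expectation).
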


\begin{proof}[\textit{Proof of Theorem 1}]
This proof is adapted from \cite{gavrilov_adaptive_2009} with modifications for our multi-level, bottom-up procedure. 
Let $R_l$ and $V_l$ denote the number of detection and false detection, respectively, under the modified null hypothesis at level $l$. Then the false assignment proportion (FAP) is written as
$$
\text{FAP}
=\frac{\sum_{l=1}^L V_l}{(\sum_{l=1}^L R_l)\bigvee 1} 
\le \sum_{l=1}^L\frac{V_l}{(\sum_{l'=1}^l R_{l'})\bigvee 1}
=\sum_{l=1}^L\frac{V_l}{(D_{l-1}+R_l)\bigvee 1}
=\sum_{l=1}^L\frac{V_l}{D_{l-1}+(R_l\bigvee 1)}.
$$

The key step to prove Theorem 1 is to show that for every level $l$
\begin{equation}
\label{p1:key_step}
\bE \left[ \frac{V_l}{D_{l-1}+(R_l \bigvee 1)} \bigg| \mathcal{G}_{l-1} \right] \le q_l,
\end{equation}
where $\mathcal{G}_{l-1}$ represents detection events below level $l$. The inequality \eqref{p1:key_step} does not guarantee the control of FAR at each level $l$ because of the cumulative effect of $D_{l-1}$, which establishes a dependence between the nodes detected at different levels. However, it leads to the control of overall FAR at $q$:
\begin{equation*}
\text{FAR}
=\bE(\text{FAP})
\le\sum_{l=1}^L \bE\left\{\bE\left[\frac{V_{l}}{D_{l-1}+(R_l\bigvee 1)}\bigg|\mathcal{G}_{l-1} \right]\right\}\le \sum_{l=1}^L q_l 
= q.
\end{equation*}

To prove (\ref{p1:key_step}), we omit the level index $l$ for simplicity of exposition. Thus we redefine the $l$th-level $p$-values to be $p_1, \ldots, p_{n^*}$, ordered $p$-values $p_{(1)} \le \cdots \le p_{(n^*)}$, and thresholds defined in (\ref{threshold1}) $\alpha_{1}\le \cdots \le\alpha_{n^*}$. We use $D_{-1}$ for $D_{l-1}$. We also omit $\mathcal{G}_{l-1}$ by acknowledging that the ensuing arguments are always conditional on the detection events below level $l$. Further, we denote the set $\left\{p_{(1)}\le \alpha_1,\ldots,p_{(k)}\le \alpha_k\right\}$ by $\mA_k$ ($k=1,\ldots, n^*$), which represents the case that the first $k$ ordered $p$-values are each below the first $k$ thresholds. 

To start, we use the definitions of $V_l$ and $R_l$ and rewrite the left hand side of (\ref{p1:key_step}) to be
\begin{equation}
\label{p1:rewrite}
\sum_{j\in\mH_0}\sum_{k=1}^{n^*}\bE
\left[\frac{\bI\left(\mA_k, p_{(k+1)} > \alpha_{k+1}, p_j \le \alpha_k\right) }{D_{-1}+k} \right],
\end{equation}
where $\mH_0$ denotes the set of modified null hypotheses at level $l$. Then, we replace the expectation by double expectations that first conditions on $p_j$ and apply Lemma 1 with $a_k=\bI\left(p_j \le \alpha_k\right)/(D_{-1}+k)$, $b_k=\Pr\left(\mA_k, p_{(k+1)} > \alpha_{k+1} \big| p_j\right)$, and $n=n^*$; note that $b_{n^*}=\Pr\left(\mA_{n^*} \big| p_j\right)$. Thus, \eqref{p1:rewrite} becomes
\begin{equation*}
\sum_{j \in \mH_0}\sum_{k=1}^{n^*} \bE 
\left\{\Pr\left(\mA_k \big| p_j\right)\times 
\left[\frac{\bI\left(p_j  \le \alpha_k\right)}{D_{-1}+k}-\frac{\bI\left(p_j \le \alpha_{k-1}\right)}{D_{-1}+k-1} \right]
\right\},
\end{equation*}
which can be reorganized as
\begin{equation}
\label{p1:A}
\sum_{j\in\mH_0}\sum_{k=1}^{n^*}
\left[\frac{\Pr\left(\mA_k, \alpha_{k-1} < p_j \le \alpha_k\right)}{D_{-1}+k}  
-\frac{\Pr\left(\mA_k, p_j \le \alpha_{k-1}\right)}{(D_{-1}+k)(D_{-1}+k-1)} \right].
\end{equation}
Let $p_{(1)}^{(-j)} \le \cdots \le p_{( n^*-1)}^{(-j)}$ be the ordered $p$-values after excluding $p_j$. We denote the set $\left\{p^{(-j)}_{(1)}\le\alpha_1, \ldots, p^{(-j)}_{(k-1)}\le\alpha_{k-1}\right\}$ by $\mB_{k-1}^{(-j)}$, which represents the case that the first $(k-1)$ ordered $p$-values after excluding $p_j$ are each below the first $(k-1)$ thresholds. We note two facts that relate $\mA_k$ and $\mB_{k-1}^{(-j)}$. First, $\mA_k$ and $\left\{\alpha_{k-1} < p_j\right\}$ together imply that $p_j$ cannot be among the first $(k-1)$ smallest $p$-values and thus the first $(k-1)$ ordered $p$-values before and after excluding $p_j$ remain the same set. In addition, for any $j\in\{(k), \ldots, (n^*)\}$, $\left\{p_j \le \alpha_k\right\}$ implies $\left\{p_{(k)}\le \alpha_k\right\}$. Thus we have
$\Pr\left(\mA_k,  \alpha_{k-1} < p_j \le \alpha_k\right)= \Pr\left(\mB_{k-1}^{(-j)}, \alpha_{k-1} < p_j \le \alpha_k\right).
$
 Second, $\mB_{k-1}^{(-j)}$ is a subset of $\mA_k$ when $p_j\le\alpha_{k-1}$, which yields
$
 \Pr\left(\mA_k, p_j \le \alpha_{k-1}\right) \geq \Pr\left(\mB_{k-1}^{(-j)}, p_j \le \alpha_{k-1}\right).
$
By the two facts, we see that expression (\ref{p1:A}) is less than
\begin{equation*}
\sum_{j\in\mH_0}\sum_{k=1}^{n^*}
\left[\frac{\Pr\left(\mB_{k-1}^{(-j)}, \alpha_{k-1} < p_j \le \alpha_k\right)}{D_{-1}+k} 
- \frac{\Pr\left(\mB_{k-1}^{(-j)}, p_j \le \alpha_{k-1}\right)}{(D_{-1}+k)(D_{-1}+k-1)}\right],
\end{equation*}
which, by the independence assumption between $p_j$ and all other $p$-values given $\mathcal{G}_{l-1}$, becomes 
\begin{equation*}
\sum_{j \in \mH_0}\sum_{k=1}^{n^*}
\Pr\left(\mB_{k-1}^{(-j)}\right)
\times\left[\frac{\Pr\left(p_j  \le \alpha_k\right)}{D_{-1}+k}
-\frac{\Pr\left(p_j \le \alpha_{k-1}\right)}{D_{-1}+k-1} \right].
\end{equation*}
Applying Lemma 1 with $a_k=\Pr\left(p_j \le \alpha_k\right)/(D_{-1}+k)$ and $b_k=\Pr\left(\mB_{k-1}^{(-j)}, p^{(-j)}_{(k)} > \alpha_k\right)$, the foregoing expression reduces to
\begin{equation}
\label{p1:back_to_rewritten}
\sum_{j \in \mH_0}\sum_{k=1}^{n^*}
\Pr\left(\mB_{k-1}^{(-j)}, p^{(-j)}_{(k)} > \alpha_k\right)
\times \frac{\Pr\left(p_j  \le \alpha_k\right)}{D_{-1}+k}.
\end{equation}
Now we find \eqref{p1:back_to_rewritten} to be an upper bound for \eqref{p1:rewrite}.

By the uniform distribution of $p_j$ for $j\in\mathcal{H}_0$ and the definition of the thresholds $\{\alpha_k\}$, we have $\Pr\left(j\le\alpha_k\right)/(D_{-1}+k)=\alpha_k/(D_{-1}+k) \le q_l (1 - \alpha_k)/(n^*+1-k) = q_l \Pr\left(p_j>\alpha_k\right)/(n^*+1-k)$. In addition, replacing $\sum_{j \in \mH_0}$ by $\sum_{j =1}^{n^*}$ in (\ref{p1:back_to_rewritten}) yields an upper bound for (\ref{p1:back_to_rewritten}):
\begin{equation*}
q_l \sum_{k=1}^{n^*} 
(n^*+1-k)^{-1}
\sum_{j =1}^{n^*}\Pr\left(\mB_{k-1}^{(-j)}, p^{(-j)}_{(k)} > \alpha_k, p_j>\alpha_k\right).
\end{equation*}
We see that $\mB_{k-1}^{(-j)}$ and $\left\{p_j>\alpha_k\right\}$ imply that $p_j$ is not among the top $(k-1)$ smallest $p$-values. For either $j=(k)$ or $j\in \{(k+1), \ldots, (n^*)\}$, we infer from $p^{(-j)}_{(k)} > \alpha_k$ and $p_j>\alpha_k$ that $p_{(k)} > \alpha_k$. Therefore, $\Pr\left(\mB_{k-1}^{(-j)}, p^{(-j)}_{(k)} > \alpha_k, p_j>\alpha_k\right)=\Pr\left(\mA_{k-1}, p_{(k)} > \alpha_k\right)$ for $j\in\{(k), \ldots, (n^*)\}$ and $0$ for $j\in\{(1), \ldots, (k-1)\}$, which simplifies the above expression to
\begin{equation*}
q_l\sum_{k=1}^{n^*}\Pr\left(\mA_{k-1}, p_{(k)} > \alpha_k\right)
=q_l\left[1-\Pr\left(\mA_{n^*}\right)\right] 
\le q_l.
\end{equation*}
We complete the proof of \eqref{p1:key_step}.
\end{proof}

\bigskip

\noindent{\large\bf A.2 Proof of Theorem 2}

We first find the mathematical form for weight $\omega_j$ (after omiting the level index $l$ from the original notation $\omega_{l,j}$) for node $j$ at level $l$. Recall that the $p$-value at node $j$ is denoted by $p_j$. Let $T_1, T_2, \ldots, T_{t}$ represent all subtrees that contain node $j$. For example, the node $N_{1,1}$ in Figure \ref{tree}(a) is contained in subtrees rooted at $N_{2,1}$, $N_{3,1}$ and $N_{4,1}$, which are denoted by $T_1$, $T_2$ and $T_3$, respectively. Let $p_{T_s}^{(-j)}$ be the set of $p$-values of all level-$l$ nodes that are contained in subtree $T_s$ excluding node $j$. First, each weight $\omega_j$ always starts with $1$ indicating the node itself. Then for each subtree $T_s$, if $p_j$ is the (only) maximum $p$-value, i.e., $p_j>p_{T_s}^{(-j)}$, rejecting node $j$ will entail the root node of that subtree to also be rejected and thus add $1$ indicating that root node to $\omega_j$. Therefore, we can write $\omega_j$ as summation of a series of indicator functions:
\begin{equation}
\label{eq:weight_def}
\omega_j = 1 + \sum_{s=1}^{t}\bI\left(p_j>p_{T_s}^{(-j)}\right).
\end{equation}
Note that $\omega_j$ is dependent on $p$-values at level $l$ and thus considered to be random (even given the detection events below level $l$). 

Lemma 2 states a property of $\omega_j$, which will be useful in the proof of Theorem 2.
\begin{lemma}{2}
Suppose node $j$ has weight $\omega_j$ as defined in \eqref{eq:weight_def}. Assume that node $j$ is under the null hypothesis and so its $p$-value $p_j$ follows the uniform distribution. Also assume that $p_j$ is independent of all other $p$-values at level $l$. Let $\mB^{(-j)}$ denote the case that all other $p$-values excluding $p_j$ belong to a Borel set. For any $\alpha \in (0,1)$, we have
\begin{equation*}
\bE\left[\omega_j\bI\left(\mB^{(-j)}, p_j \le \alpha\right)\right]
\le \frac{\alpha}{1-\alpha}\bE\left[\omega_j\bI\left(\mB^{(-j)}, p_j > \alpha\right)\right].
\end{equation*}
\end{lemma}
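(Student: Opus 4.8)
The plan is to condition on all the level-$l$ $p$-values other than $p_j$ and to exploit the fact that, with those held fixed, $\omega_j$ is a monotone function of $p_j$. First I would note from \eqref{eq:weight_def} that each indicator $\bI(p_j > p_{T_s}^{(-j)})$ is non-decreasing in $p_j$ when the remaining $p$-values are fixed, so $\omega_j$, being $1$ plus a sum of such indicators, is a non-decreasing step function of $p_j$; write it as $w(p_j)$, and note $w \ge 1$ everywhere. Because $\mB^{(-j)}$ and the function $w$ are determined entirely by the other $p$-values, and $p_j \sim \mathrm{U}[0,1]$ is independent of them, it suffices to establish the inequality conditionally on the remaining $p$-values and then take expectation.

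Conditionally, the indicator $\bI(\mB^{(-j)})$ is a fixed $0$ or $1$; when it is $0$ both sides vanish, so assume it is $1$. The two conditional expectations then become $\int_0^\alpha w(p)\,dp$ and $\int_\alpha^1 w(p)\,dp$, and the claim reduces to the deterministic inequality
\[
\int_0^\alpha w(p)\,dp \;\le\; \frac{\alpha}{1-\alpha}\int_\alpha^1 w(p)\,dp
\]
valid for any non-decreasing $w \ge 1$. To prove this I would pick a constant $c$ lying between the left and right limits of $w$ at $\alpha$ (which exists by monotonicity and satisfies $c \ge 1 > 0$). Monotonicity gives $w(p) \le c$ for $p < \alpha$ and $w(p) \ge c$ for $p > \alpha$, hence $\int_0^\alpha w \le \alpha c$ and $\int_\alpha^1 w \ge (1-\alpha)c$. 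Dividing (the right integral is strictly positive since $w \ge 1$) yields the ratio bound $\alpha/(1-\alpha)$. Multiplying the conditional inequality by $\bI(\mB^{(-j)}) \ge 0$ and taking expectation over the remaining $p$-values recovers the stated inequality.

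The main obstacle — really the single structural idea the argument turns on — is recognizing that $\omega_j$ is monotone non-decreasing in $p_j$. This is precisely what makes the weighted bound behave like the unweighted one: for a uniform null $p$-value the unweighted ratio $\Pr(p_j \le \alpha)/\Pr(p_j > \alpha)$ equals exactly $\alpha/(1-\alpha)$, and monotonicity of the weight guarantees that reweighting by $\omega_j$ only shifts mass toward the $\{p_j > \alpha\}$ side, which preserves the direction of the inequality. Everything else is routine: the conditioning step is standard given the independence assumption, and the deterministic integral inequality is a one-line consequence of sandwiching $w$ by the threshold constant $c$.
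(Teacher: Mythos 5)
Your proof is correct, and it reaches the lemma by a genuinely different (and arguably cleaner) route than the paper, although both arguments pivot on the same structural fact: $\omega_j$ is non-decreasing in $p_j$ once the other level-$l$ $p$-values are held fixed. The paper never conditions on the remaining $p$-values and never treats $\omega_j$ as a single monotone function of $p_j$; instead it exploits the linearity of $\omega_j = 1+\sum_{s}\bI\left(p_j>p_{T_s}^{(-j)}\right)$ to reduce the claim to one subtree indicator at a time, and then, for each term, invokes the integral mean value theorem to write the two normalized expectations as $\Pr\left(\mB^{(-j)}, p_j^{*}>p_{T_s}^{(-j)}\right)$ and $\Pr\left(\mB^{(-j)}, p_j^{**}>p_{T_s}^{(-j)}\right)$ with $p_j^{*}<\alpha<p_j^{**}$, concluding from monotonicity of the event in the evaluation point. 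You instead condition on the other $p$-values so that $\omega_j$ collapses to a deterministic non-decreasing step function $w$, which reduces the lemma to the elementary inequality $\int_0^\alpha w(p)\,dp \le \frac{\alpha}{1-\alpha}\int_\alpha^1 w(p)\,dp$, proved by sandwiching $w$ with a constant $c$ at the split point $\alpha$. Your route buys generality and robustness: it applies verbatim to any weight that is non-decreasing in $p_j$, irrespective of its combinatorial form, and it dispenses with the continuity/no-ties caveat that the paper's mean-value step implicitly requires (the map $p \mapsto \Pr\left(\mB^{(-j)}, p>p_{T_s}^{(-j)}\right)$ must be continuous for the MVT as invoked, which is why the paper opens by assuming no ties). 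What the paper's term-by-term route buys is that it stays close to the combinatorial meaning of the weights---each indicator corresponds to a subtree root that would be co-detected with node $j$---which is the form in which the weights are manipulated elsewhere in Appendix A.2.
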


\begin{proof}[\textit{Proof of Lemma 2}]
Assume that there is no tie in $p$-values. By the independence assumption of $p_j$ and the other $p$-values and the uniform distribution of $p_j$, we have 
$\bE\left[\bI\left(\mB^{(-j)}, p_j \le \alpha\right)\right]/\alpha
= \Pr\left(\mB^{(-j)}\right) = \bE\left[\bI\left(\mB^{(-j)}, p_j > \alpha\right)\right]/(1-\alpha)$. Due to the linearity of $\omega_j$, it then suffices to show for any subtree $T_s$ that
\begin{equation}
\label{l2:sufficient}
\frac{\bE\left[\bI\left(\mB^{(-j)}, p_j>p_{T_s}^{(-j)}, p_j \le \alpha\right)\right]}{\Pr\left(p_j \le \alpha\right)} \le 
\frac{\bE\left[\bI\left(\mB^{(-j)}, p_j>p_{T_s}^{(-j)}, p_j > \alpha\right)\right]}{\Pr\left(p_j > \alpha\right)}.
\end{equation}
By the mean value theorem, there exist $p_j^{*}$ and $p_j^{**}$, where $0<p_j^{*} < \alpha < p_j^{**}<1$, such that the left hand side of \eqref{l2:sufficient} becomes 
$$
\int_{0}^{\alpha}\Pr\left(\mB^{(-j)}, p_j> p_{T_s}^{(-j)}~\big|~p_j\right)dF(p_j)\Big/\int_{0}^{\alpha}1dF(p_j) 
= \Pr\left(\mB^{(-j)}, p_j^{*}>p_{T_s}^{(-j)}\right)
$$
and the right hand side becomes 
$$
\int_{\alpha}^{1}\Pr\left(\mB^{(-j)}, p_j>p_{T_s}^{(-j)}~\big|~p_j\right)dF(p_j)\Big/\int_{\alpha}^{1}1dF(p_j)
= \Pr\left(\mB^{(-j)}, p_j^{**}>p_{T_s}^{(-j)}\right).
$$
The fact that $p_j^{*} \le p_j^{**}$ gives us \eqref{l2:sufficient}.
\end{proof}

\vspace*{0.0cm}

\begin{proof}[\textit{Proof of Theorem 2}]

We introduce some new notation related to the weights $\omega_1\ldots,\omega_{n^*}$ for level-$l$ nodes. We denote the relative ordering of $p$-values $p_1,\ldots,p_{n^*}$ by $\mathcal{O}$. The weights defined in \eqref{eq:weight_def} are thus uniquely determined given the detection events at lower levels $\mathcal{G}_{l-1}$ as well as the ordering $\mathcal{O}$. Let $\omega_{[j]}$ be the weight corresponding to the $j$-th smallest $p$-value $p_{(j)}$ and $C_k = \sum_{j=1}^k \omega_{[j]}$ for $k=1,\ldots,n^*$. Thus, $C_k$ is also deterministic given $\{\mathcal{G}_{l-1}, \mathcal{O}\}$. Let $\omega_{(1)} \le \omega_{(2)} \le \cdots \le \omega_{(n^*)}$ denote the sorted weights by their own values; note that $\omega_{(j)}$ is often different from $\omega_{[j]}$. As illustrated in Section 2.4, $\omega_{(j)}$ is deterministic given $\mathcal{G}_{l-1}$ only under Condition (C1), regardless of the ordering of $p$-values. Denote $c_k=\sum_{j=1}^k\omega_{(j)}$ and $\overline{c}_k=\sum_{j=k}^{n^*}\omega_{(j)}$. Hence, $c_k \le C_k$, $\overline{c}_k \geq C_{n^*-k+1}$, and the threshold $\alpha_k$ satisfies
\begin{equation}
\label{p2:thresholds}
\frac{\alpha_k}{1-\alpha_k} \le \frac{D_{-1} + \sum_{j=1}^k \omega_{(j)} }{\sum_{j=k}^{n^*} \omega_{(j)}}q_l = \frac{D_{-1}+c_k}{\overline{c}_k}q_l.
\end{equation}

Like the proof of Theorem 1, the key step is to show that for every level $l$
\begin{equation}
\label{p2:key_step}
\bE \left[ \frac{V_l}{D_{l-1}+(R_l \bigvee 1)} \bigg| \mathcal{G}_{l-1}  \right] \le q_l.
\end{equation}
Again, we omit the index $l$ and the condition $\mathcal{G}_{l-1}$ and denote the set $\left\{p_{(1)}\le \alpha_1,\ldots,p_{(k)}\le \alpha_k\right\}$ by $\mA_k$. The left hand side of \eqref{p2:key_step} can be rewritten as
\begin{equation*}
\sum_{j\in\mH_0}\sum_{k=1}^{n^*}\bE\left[\frac{\omega_{j}\bI\left(\mA_k, p_{(k+1)} > \alpha_{k+1}, p_{j} \le \alpha_k\right) }{D_{-1}+C_k} \right], 
\end{equation*}
where $C_k$ by definition counts the number of all rejections that are entailed by rejecting the $k$-th smallest $p$-value. Replacing the expectation by double expectations that first conditions on $\mathcal{O}$ yields
\begin{equation}
\label{p2:rewrite}
\sum_{j\in\mH_0}\sum_{k=1}^{n^*}\bE_{\mathcal{O}}\left\{\omega_{j}\bE\left[\frac{\bI\left(\mA_k, p_{(k+1)} > \alpha_{k+1}, p_j \le \alpha_k\right) }{D_{-1}+C_k} \bigg| \mathcal{O} \right] \right\},
\end{equation}
where $C_k$ becomes a constant given $\mathcal{O}$. 

Once we condition on the order $\mathcal{O}$ and the weights $\{w_j\}$ become constants, similar arguments used in steps between expressions \eqref{p1:rewrite} and \eqref{p1:back_to_rewritten} in proof of Theorem 1 can also be used to obtain an upper bound for \eqref{p2:rewrite}:
\begin{equation}
\label{p2:back_to_rewritten}
    \sum_{j\in\mH_0}\sum_{k=1}^{n^*}\bE_{\mathcal{O}}\left\{\omega_{j} \bE\left[\frac{\bI\left(\mB_{k-1}^{(-j)}, p^{(-j)}_{(k)} > \alpha_k, p_{j} \le \alpha_k\right) }{D_{-1}+C_k} \Bigg| \mathcal{O} \right] \right\}.
    \end{equation}
Next, we combine the double expectations in \eqref{p2:back_to_rewritten} into one and use $c_k \le C_k$ to find that \eqref{p2:back_to_rewritten} is less than
\begin{equation}
\label{p2:back_to_rewritten_1}
 \sum_{j\in\mH_0}\sum_{k=1}^{n^*}\left(D_{-1}+c_k\right)^{-1}\bE\left[\omega_{j}\bI\left(\mB_{k-1}^{(-j)}, p^{(-j)}_{(k)} > \alpha_k, p_{j} \le \alpha_k\right) \right].
    \end{equation}
    
Now the weight $\omega_j$ is considered random again. According to Lemma 2 with $\mB^{(-j)}=\left\{\mB_{k-1}^{(-j)}, p^{(-j)}_{(k)} > \alpha_k\right\}$, $\alpha=\alpha_k$, and a null $p$-value $p_j$, we obtain 
\begin{equation*}
\bE\left[ \omega_{j}\bI\left(\mB_{k-1}^{(-j)}, p^{(-j)}_{(k)} > \alpha_k, p_{j} \le \alpha_k\right)   \right] \le \frac{\alpha_k}{1-\alpha_k} \bE\left[ \omega_{j}\bI\left(\mB_{k-1}^{(-j)}, p^{(-j)}_{(k)} > \alpha_k, p_{j} > \alpha_k\right)   \right].
\end{equation*}
Using \eqref{p2:thresholds} and replacing $\sum_{j\in\mH_0}$ by $\sum_{j=1}^{n^*}$, we see (\ref{p2:back_to_rewritten_1}) is less than 
\begin{equation*}
q_l\sum_{k=1}^{n^*} \left(\overline{c}_k\right)^{-1}\bE\left[\sum_{j=1}^{n^*}\omega_{j}\bI\left(\mB_{k-1}^{(-j)}, p^{(-j)}_{(k)} > \alpha_k, p_{j} > \alpha_k\right) \right].
\end{equation*}
By the same arguments as in proof of Theorem 1, $\bI\left(\mB_{k-1}^{(-j)}, p^{(-j)}_{(k)} > \alpha_k, p_j>\alpha_k\right)=\bI\left(\mA_{k-1}, p_{(k)} > \alpha_k\right)$ for $j\in\{(k), \ldots, (n^*)\}$ and $0$ for $j\in\{(1), \ldots, (k-1)\}$. Therefore, the above expression simplifies to 
$$
q_l\sum_{k=1}^{n^*} \left(\overline{c}_k\right)^{-1}\bE\left[\overline{s}_k \bI\left(\mA_{k-1}, p_{(k)} > \alpha_k\right)\right]
\le q_l \sum_{k=1}^{n^*}  \Pr\left(\mA_{k-1}, p_{(k)} > \alpha_k\right) \nonumber
= q_l\left[1-\Pr\left(\mA_{n^*}\right)\right] 
\le q_l.
$$
We complete the proof of \eqref{p2:key_step}.
\end{proof}

\bigskip

\noindent{\large\bf A.3 Least favorable weights}

We obtain $\widetilde{\boldsymbol\omega}=(\widetilde{\omega}_{l,(1)},\cdots,\widetilde{\omega}_{l,(n_l^*)})$ in a recursive manner. Recall that a weight $\omega_{l,j}$ counts the number of nodes that are simultaneously rejected if node $j$ is rejected, which includes node $j$ and some of its ancestors. Let $\widetilde{\boldsymbol{\omega}}^{(l+h)}$ $(h=0,1,2,\ldots)$ denote the least favorable set of weights against any possible set of weights $\boldsymbol{\omega}^{(l+h)}$ when nodes between level $l$ and level $(l+h)$ (including levels $l$ and $l+h$) are counted. Trivially, $\widetilde{\boldsymbol{\omega}}^{(l)}=\boldsymbol{\omega}^{(l)}=(1,\ldots,1)$ with 1 for every node when only nodes at level $l$ are counted, and this serves as the starting point of the recursive algorithm. At the root node level, denoted by $(l+h^*)$, $\widetilde{\boldsymbol{\omega}}^{(l+h^*)}$ is the least favorable set of weights $\widetilde{\boldsymbol{\omega}}$ that we wish to obtain, and is the end point of the algorithm. We find that $\widetilde{\boldsymbol{\omega}}^{(l+h)}$ can be derived from $\widetilde{\boldsymbol{\omega}}^{(l+h-1)}$ by traversing over every node at level $l+h$, locating the subset of elements in $\widetilde{\boldsymbol{\omega}}^{(l+h-1)}$ that correspond to the level-$l$ descendants of that node, and adding 1 to the largest element of that subset; if there exist multiple largest elements, randomly pick one and add 1. For example, to calculate least favorable weights for the bottom level of the tree in Figure \ref{tree}(b), we obtain $\widetilde{\boldsymbol{\omega}}^{(l)}=(1,1,1,1,1,1)$, $\widetilde{\boldsymbol{\omega}}^{(l+1)}=(1,2,1,1,1,2)$, and $\widetilde{\boldsymbol{\omega}}^{(l+2)}=(1,2,1,1,1,3)$, and ultimately the sorted version $\widetilde{\boldsymbol{\omega}}=(1,1,1,1,2,3)$. Note that we ordered the individual weights in intermediate $\widetilde{\boldsymbol{\omega}}^{(l+h)}$ by the physical position of the bottom-level nodes in the displayed tree, and only sort the weights in the last step. This procedure guarantees that every $\widetilde{\boldsymbol{\omega}}^{(l+h)}$ ($h=0,1,\ldots,h^*$) is the least favorable against any arbitrary $\boldsymbol{\omega}^{(l+h)}$ in which the count 1 is added to one element other than the largest one for at least one subset. It also ensures the uniqueness of sorted $\widetilde{\boldsymbol{\omega}}$, which leads to a unique set of thresholds. Finally, the least favorable weights $\widetilde{\boldsymbol{\omega}}$ corresponds to the ordering of $p$-values that is equal to the ordering of depths at level-$l$ nodes, e.g., the node with the largest $p$-value is the node with the largest depth.

\bigskip

\noindent{\large\bf A.4 Proof of Theorem 3}

\begin{proof}[\textit{Proof of Theorem 3}]
Let $\widetilde{\omega}_{(1)} \le\widetilde{\omega}_{(2)} \le \cdots \le \widetilde{\omega}_{(n^*)}$ denote sorted least favorable weights after omitting the level index $l$. Denote $\widetilde{c}_k = \sum_{j=1}^k \widetilde{\omega}_{(j)}$ and  $\overline{\widetilde{c}}_k = \sum_{j=k}^{n^*} \widetilde{\omega}_{(j)}$. The same arguments in the proof of Theorem 2 can be used except that we use $\widetilde{c}_k$ in place of $c_k$, $\overline{\widetilde{c}}_k$ in place of $\overline{c}_k$, and the thresholds \eqref{threshold3} in place of thresholds \eqref{threshold2}.
\end{proof}

\bigskip

\noindent{\large\bf A.5 Proof of Theorem 4}

\begin{proof}[\textit{Proof of Theorem 4}]
The OTU-level testing in the two-stage procedure is exactly the same as the OTU-level testing in the one-stage procedure, so we immediately have $\mathrm{FAR}_{\text{otu}} \le q_1$. Then we rewrite the FAP among all taxa (inner) levels as
$$
\mathrm{FAP}_{\text{taxa}} = \frac{\sum_{l=2}^LV_l}{(\sum_{l=2}^LR_l)\bigvee 1}\le \sum_{l=2}^L\frac{V_l}{(\sum_{l'=2}^lR_{l'})\bigvee 1}=\sum_{l=2}^L\frac{V_l}{(D_{l-1}^{\dagger}+R_l)\bigvee 1}=\sum_{l=2}^L\frac{V_l}{D_{l-1}^{\dagger}+(R_l\bigvee 1)}.
$$
We note that $D_{l-1}^{\dagger}$ is deterministic conditioning on the detection events at lower levels, denoted by $\mathcal{G}_{l-1}^{\dagger}$, which excludes the OTU level. Then we follow the same steps in the proofs of Theorems 2 and 3, replacing $D_{l-1}$ by $D_{l-1}^{\dagger}$ and $\mathcal{G}_{l-1}$ by $\mathcal{G}_{l-1}^{\dagger}$ to obtain 
$\bE \left[ V_l/\left\{D_{l-1}^{\dagger}+(R_l \bigvee 1)\right\} \Big| \mathcal{G}_{l-1}^{\dagger}  \right] \le q_l$ for $l = 2,  \ldots, L$. Finally, 
$$
\text{FAR}_{\text{taxa}}=\bE\left(\mathrm{FAP}_{\text{taxa}} \right)\le\sum_{l=2}^L \bE\left\{ \bE\left[\frac{V_{l}}{D_{l-1}^{\dagger}+(R_l\bigvee 1)}\bigg|\mathcal{G}_{l-1}^{\dagger} \right]\right\}\le \sum_{l=2}^Lq_l = q_{-1},
$$
which implies that FAR among all taxa levels are controlled by $q_{-1}$. Indeed, this is the same as applying the one-stage testing at FAR $q_{-1}$ to the subtree after removing the whole OTU level and the higher-level taxa that are detected because all of their corresponding OTUs are detected.
\end{proof}

\bibliographystyle{jasa.bst}
\bibliography{literature}

\newpage

\section*{Supplementary Materials}

\renewcommand{\thefigure}{S\arabic{figure}}

\setcounter{figure}{0}

\renewcommand{\thetable}{S\arabic{table}}

\setcounter{table}{0}

\setcounter{page}{1}

\begin{figure}[H]
\centering
\includegraphics[width=0.7\textwidth, angle=90]{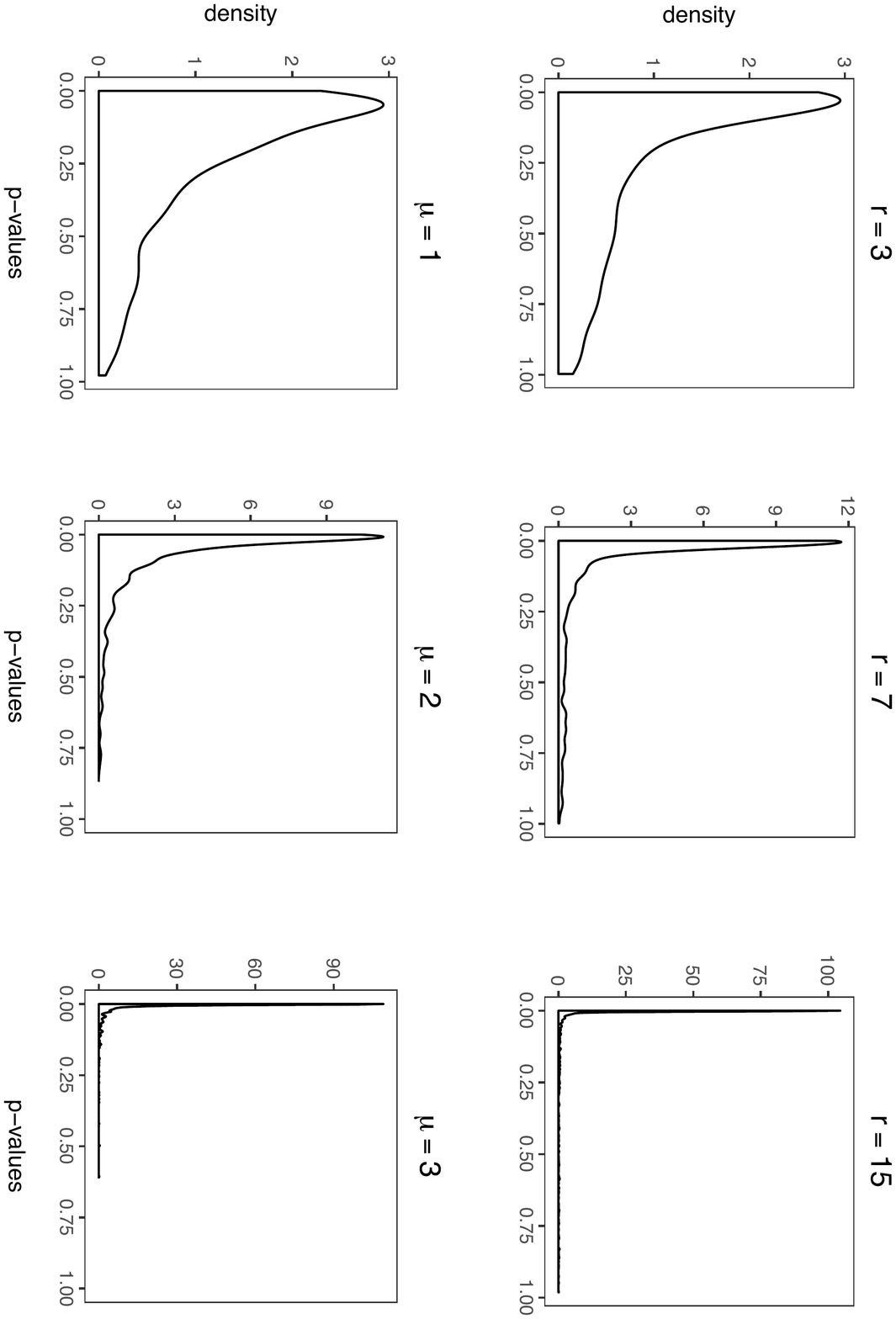}
\caption{\label{simulated_pvalue} Density functions of $p$-values generated from the Beta distribution $\mathrm{Beta}(1/r, 1)$ (upper panel) and the Gaussian-tailed model (lower panel). The two plots in each column have comparable ``height" at around zero but the upper ones always have heavier right tails.}
\end{figure}

\begin{figure}[H]
\centering
\includegraphics[width=0.9\textwidth]{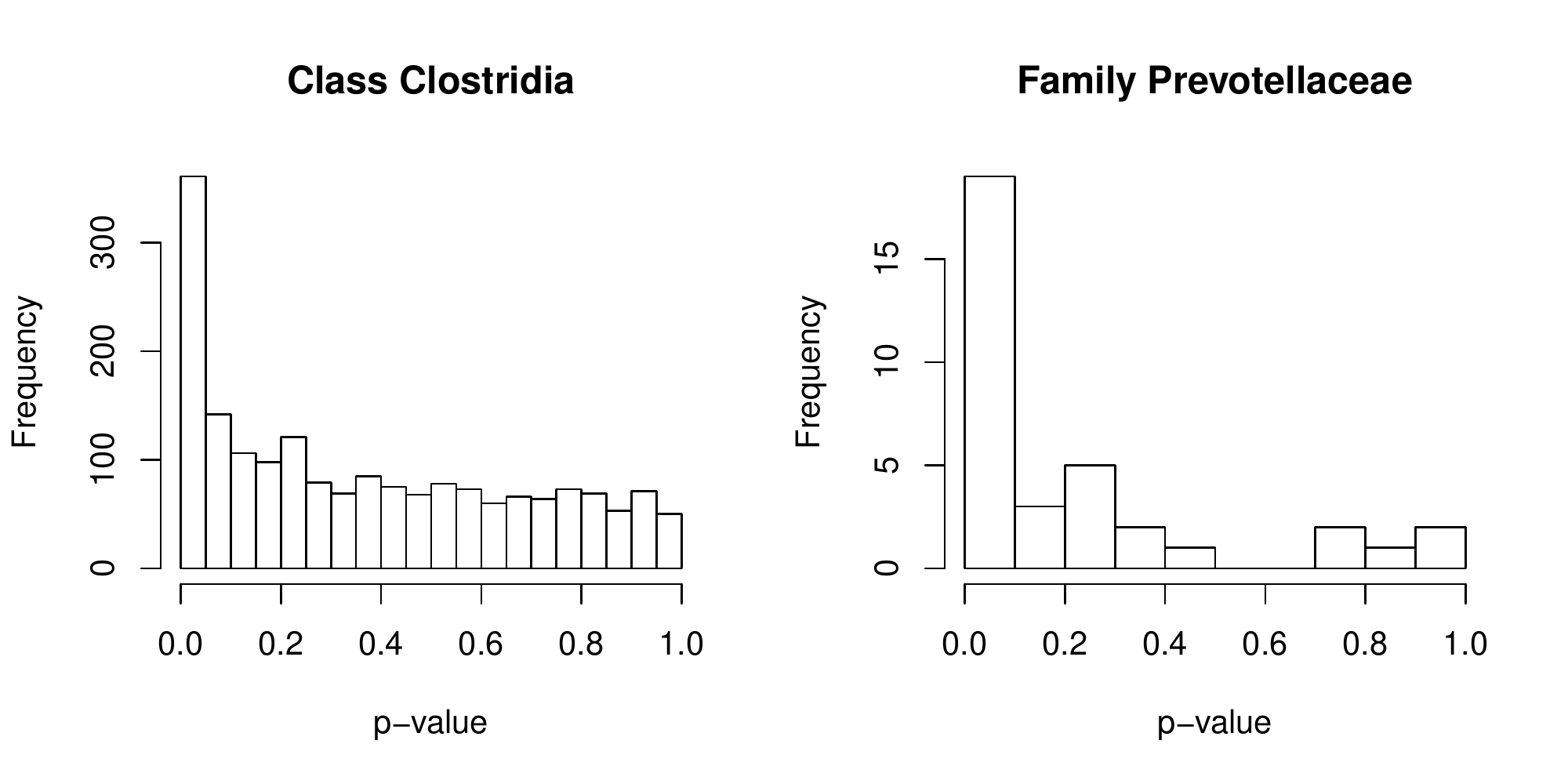}
\caption{\label{empirical_pvalue} Empirical distributions of $p$-values for OTUs in class \textit{Clostridia} and family \textit{Prevotellaceae} in the IBD data. These two taxa were detected to be driver taxa by the weighted bottom-up test and contain the most OTUs.}
\end{figure}

\begin{figure}[H]
\centering
\includegraphics[width=1.0\textwidth]{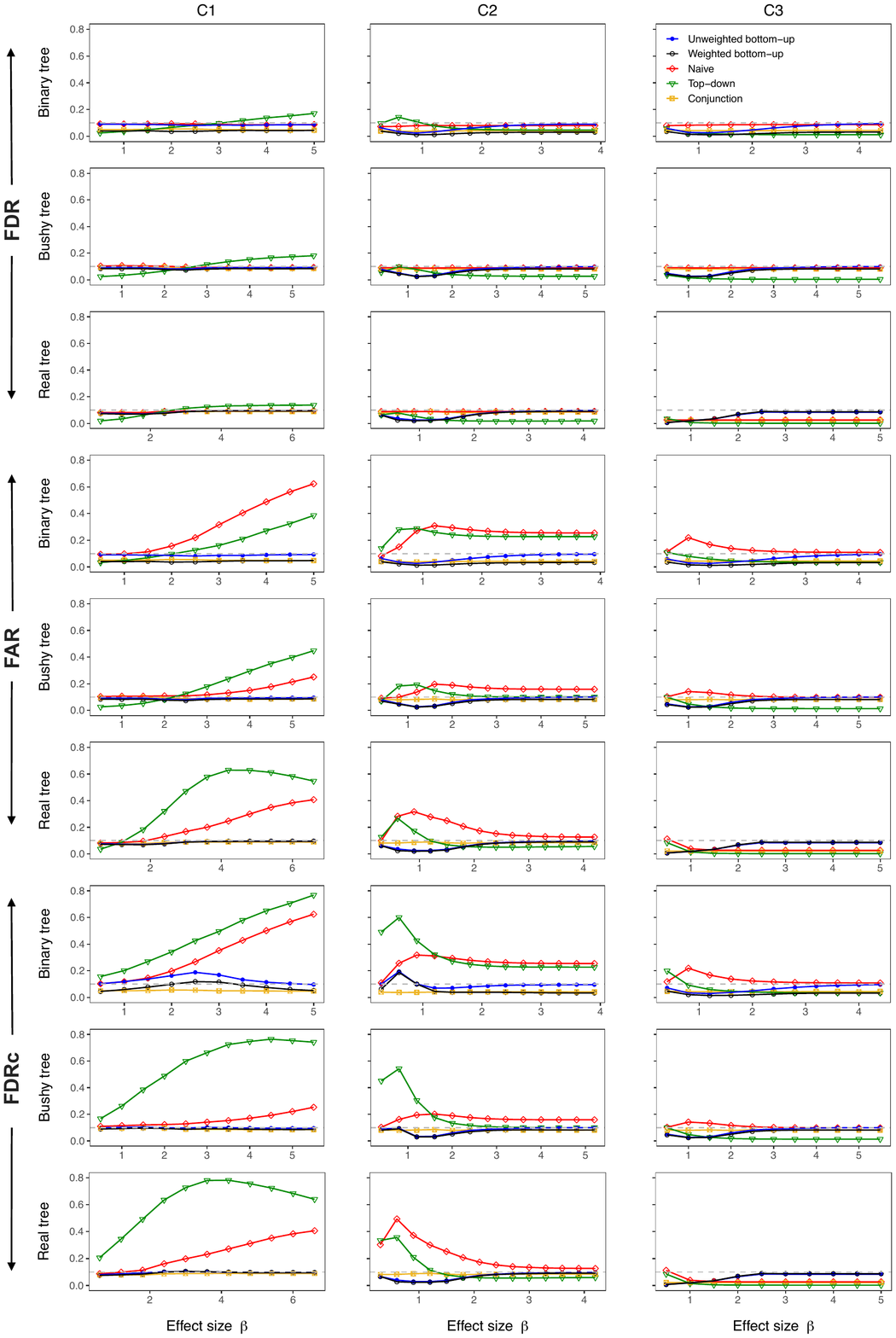}
\caption{\label{error_rates_gaussian} Error rates for testing all nodes in the tree. The non-null $p$-values at leaf nodes were simulated from the Gaussian-tailed model. }
\end{figure}

\begin{figure}[H]
\centering
\includegraphics[width=0.8\textwidth]{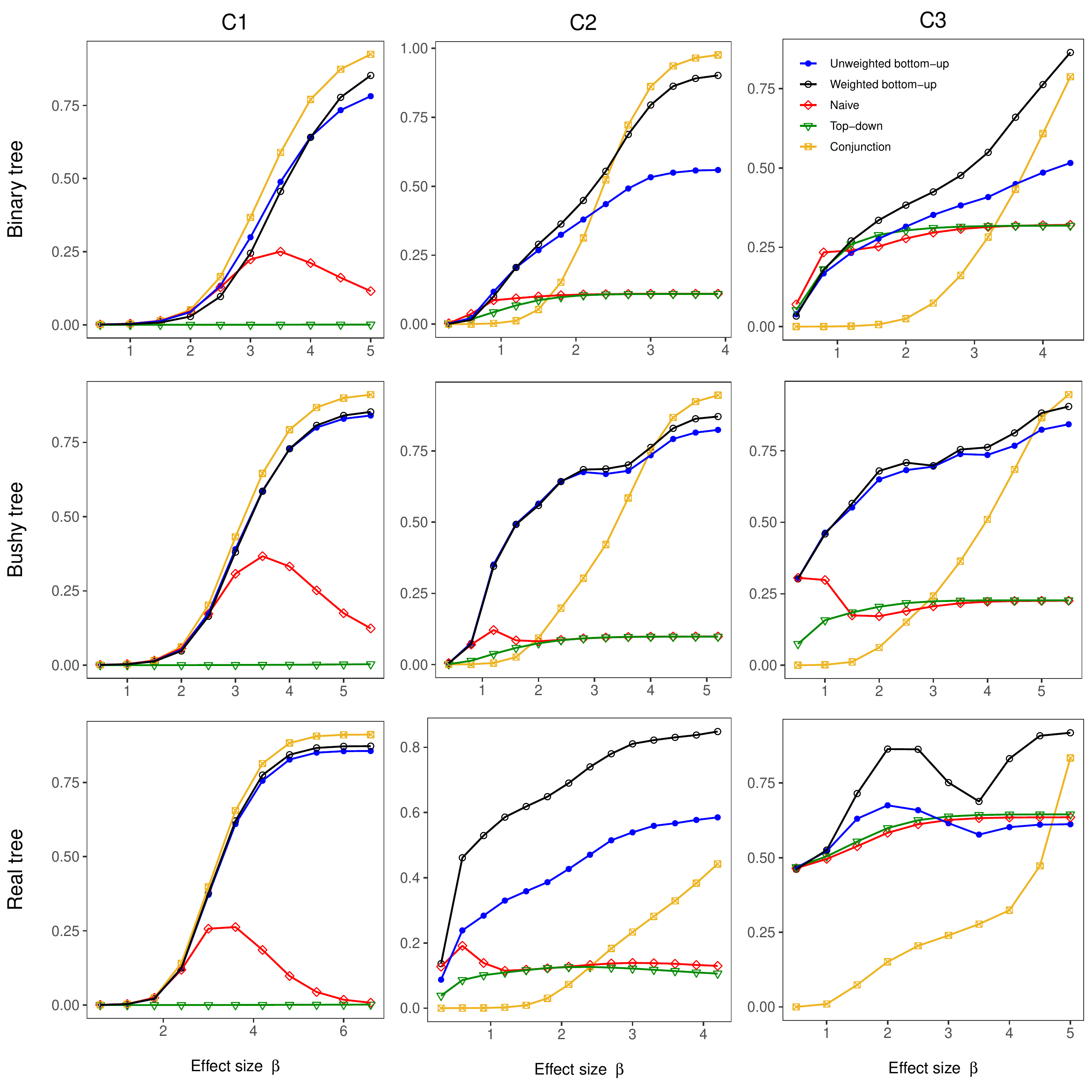}
\caption{\label{sensitivity_gaussian} Accuracy (weighted Jaccard similarity) for detecting all associated nodes (including the driver nodes and all of their descendants at all levels). The non-null $p$-values at leaf nodes were simulated from the Gaussian-tailed model. }
\end{figure}

\begin{figure}[H]
\centering
\includegraphics[width=0.8\textwidth]{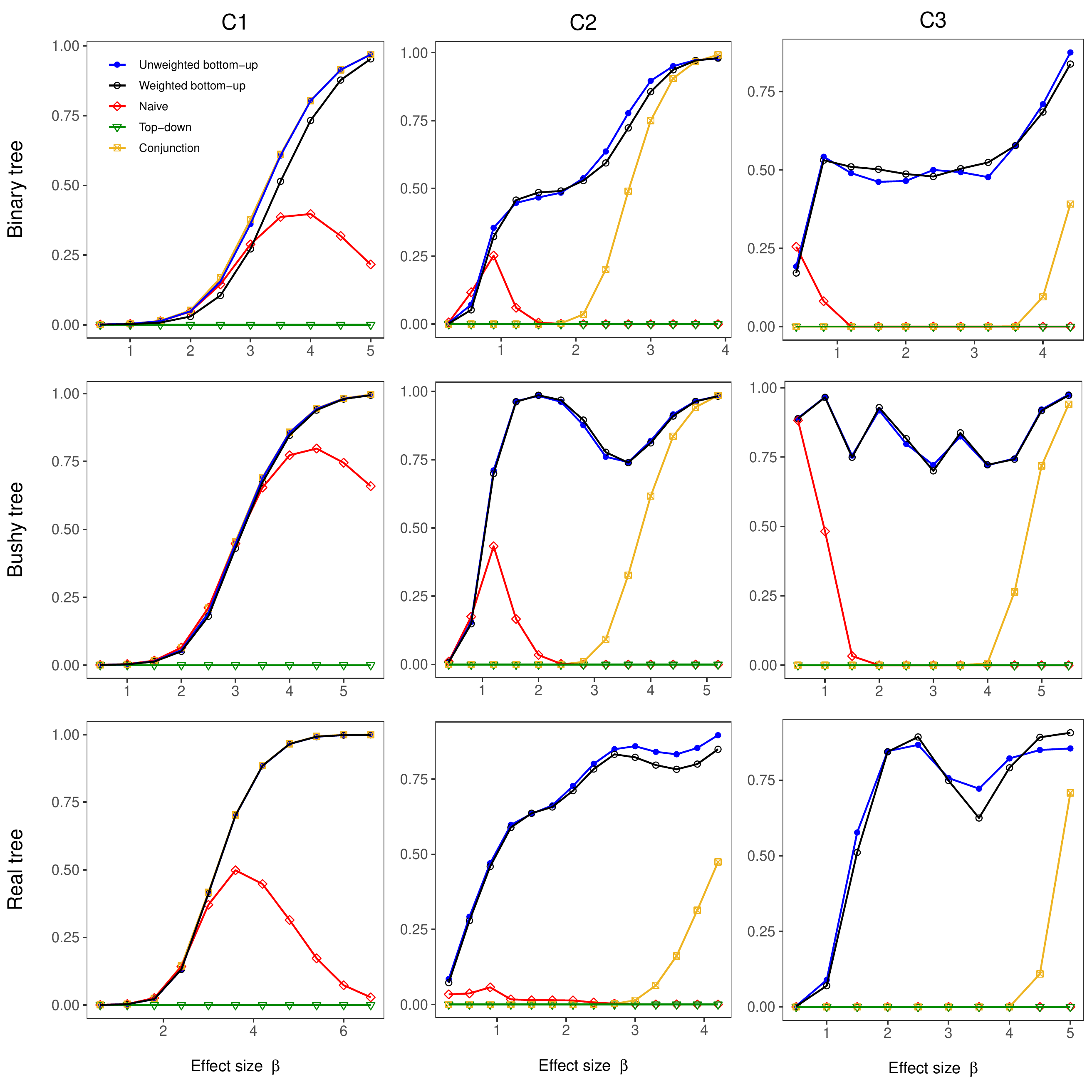}
\caption{\label{driver_nodes_gaussian} Percentage of driver nodes that were pinpointed. The non-null $p$-values at leaf nodes were simulated from the Gaussian-tailed model.}
\end{figure}

\begin{sidewaystable}[htbp]
  \centering
  \caption{Taxa detected by the weighted bottom-up test to be differential abundant between the UC and control groups}
  \scriptsize
    \begin{tabular}{ll}   
    \toprule
       Taxon & \multicolumn{1}{p{4.19em}}{$p$-value} \\
    \midrule
\textit{p\_Actinobacteria}; \textit{c\_Coriobacteriia}\dr& $4.04\times10^{-4}$ \\
\textit{p\_Actinobacteria}; \textit{c\_Coriobacteriia}\dr; \textit{o\_Coriobacteriales} & $4.04\times10^{-4}$ \\
\textit{p\_Actinobacteria}; \textit{c\_Coriobacteriia}\dr; \textit{o\_Coriobacteriales}; \textit{f\_Coriobacteriaceae}& $4.04\times10^{-4}$ \\
\textit{p\_Actinobacteria}; \textit{c\_Coriobacteriia}\dr; \textit{o\_Coriobacteriales}; \textit{f\_Coriobacteriaceae}; \textit{g\_Slackia}& $4.61\times10^{-4}$ \\
\textit{p\_Bacteroidetes}; \textit{c\_Bacteroidia}; \textit{o\_Bacteroidales}; \textit{f\_Bacteroidaceae}; \textit{g\_Bacteroides}; \textit{s\_ovatus}\dr & $2.57\times10^{-7}$ \\
\textit{p\_Bacteroidetes}; \textit{c\_Bacteroidia}; \textit{o\_Bacteroidales}; \textit{f\_[Paraprevotellaceae]}; \textit{g\_[Prevotella]}\dr& $8.57\times10^{-4}$ \\
\textit{p\_Bacteroidetes}; \textit{c\_Bacteroidia}; \textit{o\_Bacteroidales}; \textit{f\_Prevotellaceae}\dr& $1.95\times10^{-3}$ \\
\textit{p\_Bacteroidetes}; \textit{c\_Bacteroidia}; \textit{o\_Bacteroidales}; \textit{f\_Prevotellaceae}\dr; \textit{g\_Prevotella}& $1.95\times10^{-3}$ \\
\textit{p\_Bacteroidetes}; \textit{c\_Bacteroidia}; \textit{o\_Bacteroidales}; \textit{f\_Prevotellaceae}\dr; \textit{g\_Prevotella}; \textit{s\_copri}   & $1.44\times10^{-9}$ \\
\textit{p\_Bacteroidetes}; \textit{c\_Bacteroidia}; \textit{o\_Bacteroidales}; \textit{f\_S24-7}\dr& $2.10\times10^{-3}$ \\
\textit{p\_Cyanobacteria}; \textit{c\_Chloroplast}\dr& $1.47\times10^{-4}$ \\
\textit{p\_Cyanobacteria}; \textit{c\_Chloroplast}\dr; \textit{o\_Streptophyta}& $1.47\times10^{-4}$ \\
\textit{p\_Firmicutes}; \textit{c\_Clostridia}\dr& $7.77\times10^{-4}$ \\
\textit{p\_Firmicutes}; \textit{c\_Clostridia}\dr; \textit{o\_Clostridiales} & $< 1\times10^{-16}$ \\
\textit{p\_Firmicutes}; \textit{c\_Clostridia}\dr; \textit{o\_Clostridiales}; \textit{f\_Clostridiaceae}; \textit{g\_Clostridium}& $5.80\times10^{-5}$ \\
\textit{p\_Firmicutes}; \textit{c\_Clostridia}\dr; \textit{o\_Clostridiales}; \textit{f\_Lachnospiraceae}& $2.23\times10^{-13}$ \\
\textit{p\_Firmicutes}; \textit{c\_Clostridia}\dr; \textit{o\_Clostridiales}; \textit{f\_Lachnospiraceae}; \textit{g\_Coprococcus}& $5.27\times10^{-7}$ \\
\textit{p\_Firmicutes}; \textit{c\_Clostridia}\dr; \textit{o\_Clostridiales}; \textit{f\_Ruminococcaceae}& $< 1\times10^{-16}$ \\
\textit{p\_Firmicutes}; \textit{c\_Clostridia}\dr; \textit{o\_Clostridiales}; \textit{f\_Ruminococcaceae}; \textit{g\_Faecalibacterium}; \textit{s\_prausnitzii} & $2.02\times10^{-4}$ \\
\textit{p\_Firmicutes}; \textit{c\_Clostridia}\dr; \textit{o\_Clostridiales}; \textit{f\_Veillonellaceae}; \textit{g\_Veillonella}; \textit{s\_parvula} & $1.99\times10^{-4}$ \\
\textit{p\_Firmicutes}; \textit{c\_Erysipelotrichi}\dr& $1.37\times10^{-5}$ \\
\textit{p\_Firmicutes}; \textit{c\_Erysipelotrichi}\dr; \textit{o\_Erysipelotrichales}& $1.37\times10^{-5}$ \\
\textit{p\_Firmicutes}; \textit{c\_Erysipelotrichi}\dr; \textit{o\_Erysipelotrichales}; \textit{f\_Erysipelotrichaceae} & $1.37\times10^{-5}$ \\
\textit{p\_Proteobacteria}; \textit{c\_Gammaproteobacteria}; \textit{o\_Enterobacteriales}; \textit{f\_Enterobacteriaceae}; \textit{g\_Morganella}\dr& $1.39\times10^{-3}$ \\
\textit{p\_Proteobacteria}; \textit{c\_Gammaproteobacteria}; \textit{o\_Enterobacteriales}; \textit{f\_Enterobacteriaceae}; \textit{g\_Enterobacter}; \textit{s\_radicincitans}\dr & $5.63\times10^{-4}$ \\
\textit{p\_Tenericutes}; \textit{c\_RF3}\dr& $2.81\times10^{-3}$ \\
\textit{p\_Verrucomicrobia}\dr& $8.48\times10^{-5}$ \\
\textit{p\_Verrucomicrobia}\dr; \textit{c\_Verrucomicrobiae}&$8.48\times10^{-5}$ \\
\textit{p\_Verrucomicrobia}\dr; \textit{c\_Verrucomicrobiae}; \textit{o\_Verrucomicrobiales} & $8.48\times10^{-5}$ \\
\textit{p\_Verrucomicrobia}\dr; \textit{c\_Verrucomicrobiae}; \textit{o\_Verrucomicrobiales}; \textit{f\_Verrucomicrobiaceae}& $8.48\times10^{-5}$ \\
\textit{p\_Verrucomicrobia}\dr; \textit{c\_Verrucomicrobiae}; \textit{o\_Verrucomicrobiales}; \textit{f\_Verrucomicrobiaceae}; \textit{g\_Akkermansia}& $8.48\times10^{-5}$ \\

    \bottomrule
    \end{tabular}%
  \label{tab:uchc}%
      \\\flushleft NOTE: The detected driver taxa are marked with ``\#". Kingdom \textit{Bacteria} is omitted from the taxon names. 
\end{sidewaystable}%

\end{document}